\pgfplotsset{compat=1.16}
\newtheorem{thm}{Theorem}
\newtheorem{assum}{Assumption}
\newtheorem{remark}{Remark}
\newtheorem{lemma}{Lemma}
\newtheorem{problem}{Problem}
\newtheorem{cor}{Corollary}
\newcommand{\estim}[1]{\hat{\mathbf{U}}_{#1}}
\newcommand{\estimST}[1]{\hat{U}_{#1}}
\newcommand{\estimGGD}[2]{\mathbf{\Gamma}_{#2}}
\newcommand{\estimGGDST}[2]{\Gamma_{#2}}
\newcommand{\subs}[1]{\mathbf{U}_{#1}}
\newcommand{\subsST}[1]{U_{#1}}
\newcommand{\traj}[2]{u_{#2}}
\newcommand{\ntraj}[2]{\bar{u}_{#2}}
\def\sysorder{k}
\def\windowlength{T}
\def\gradnoise{N_t}
\definecolor{myBlue}{rgb}{0,0.263,0.576}
\definecolor{myRed}{RGB}{216,27,96}
\definecolor{myGreen}{RGB}{255,193,7}
\newacronym{GREAT}{GREAT}{\textit{Grassmannian Recursive Algorithm for Tracking}}
\newacronym{SVD}{SVD}{singular value decomposition}
\newacronym{LTI}{LTI}{Linear Time-Invariant}
\newacronym{LTV}{LTV}{Linear Time-Varying}
\newacronym{PAST}{PAST}{Projection Approximation Subspace Tracking}
\newcommand\Grass[2]{\operatorname{Gr}(#1, #2)}
\def\mytitle{GREAT: Grassmannian REcursive Algorithm for Tracking \& Online System Identification}
\def\BibTeX{{\rm B\kern-.05em{\sc i\kern-.025em b}\kern-.08em
    T\kern-.1667em\lower.7ex\hbox{E}\kern-.125emX}}
\begin{document}
\title{\mytitle}

\author{Andr\'as Sasfi, Alberto Padoan, \IEEEmembership{Member, IEEE}, Ivan Markovsky, \IEEEmembership{Member, IEEE}, Florian D\"orfler, \IEEEmembership{Senior Member, IEEE}
\thanks{A. Sasfi, A. Padoan, and F. D\"orfler are with the Department of Information Technology and
Electrical Engineering, ETH Z\"urich, 8092 Z\"urich, Switzerland (e-mail: \{asasfi, apadoan, doerfler\}@control.ee.ethz.ch).}
\thanks{I. Markovsky is with the Catalan Institution for Research and Advanced Studies, 08010 Barcelona, Spain, and also with the International Centre for Numerical Methods in Engineering, 08034 Barcelona, Spain (e-mail: ivan.markovsky@cimne.upc.edu).}}

\maketitle

\begin{abstract}
This paper introduces an online approach for identifying time-varying subspaces defined by linear dynamical systems. 
The approach of representing linear systems by non-parametric subspace models has received significant interest in the field of data-driven control recently.
This system representation enables us to provide rigorous guarantees for linear time-varying systems, which are difficult to obtain for parametric system models.
The proposed method leverages optimization on the Grassmann manifold leading to the \gls{GREAT}.
We view subspaces as points on the Grassmann manifold and adapt the estimate based on online data by performing optimization on the manifold.
At each time step, a single measurement from the current subspace corrupted by a bounded error is available.
The subspace estimate is updated online using Grassmannian gradient descent on a cost function incorporating a window of the most recent data.
Under suitable assumptions on the signal-to-noise ratio of the online data and the subspace's rate of change, we establish theoretical guarantees for the resulting algorithm.
More specifically, we prove an exponential convergence rate and provide an uncertainty quantification of the estimates in terms of an upper bound on their distance to the true subspace.
The applicability of the proposed algorithm is demonstrated by means of numerical examples.
\end{abstract}

\begin{IEEEkeywords}
System identification, time-varying systems, subspace methods, behavioral systems, manifold optimization
\end{IEEEkeywords}

\section{Introduction}
Subspace representations of linear dynamical systems have recently attracted considerable attention in the control and system identification communities~\cite{coulson2019data,coulson2021distributionally,berberich2020data,breschi2023data,willems2005note,verhoek2021data,vanwaarde2020informativity,depersis2019formulas}.
Rooted in behavioral systems theory~\cite{willems1997introduction}, these representations offer a non-parametric framework that describes linear systems as subspaces containing all finite-horizon trajectories.
Unlike traditional parametric models, this approach enables the identification of \gls{LTI} behaviors directly from data that is sufficiently informative~\cite{willems2005note,markovsky2023identifiability}.
The rise of this framework has spurred the development of subspace system identification methods~\cite{van1996subspace} and novel data-driven control techniques~\cite{coulson2019data,coulson2021distributionally,berberich2020data,breschi2023data,willems2005note,verhoek2021data,vanwaarde2020informativity,depersis2019formulas}. 
These approaches exploit the subspace representation of a system, typically defined by the span of a data matrix. 
Empirical studies have shown that the methods exploiting subspace representations of systems often perform remarkably well in various application domains (see, e.g.,~\cite[Sec. 5.2.4.]{markovsky2021behavioral} for an overview).

However, the applicability of these identification and control methods is limited to time-invariant systems.
Recent surveys~\cite{markovsky2021behavioral,berberich2024overview} emphasize this research gap and highlight the challenge of online adaptation.
Another open research challenge is the extension of data-driven control methods to nonlinear systems.
Recent data-driven control methods exploit the fact that nonlinear systems can be over-approximated as \gls{LTV} systems under suitable assumptions on the time-variance~\cite{verhoek2023direct,berberich2022linear}.
We adopt the perspective that trajectories of an \gls{LTV} system always lie in a subspace that may change over time.
Online identification of this non-parametric representation reduces to tracking this subspace.
In this work, we address these challenges by proposing an algorithm that is capable of identifying the subspace representation of \gls{LTV} systems online, with theoretical guarantees on the convergence rate and bias of the estimates.

Subspace identification methods have been extended in the literature to adapt the system model online by recursive updates.
These methods first estimate the subspace spanned by the observability matrix and then compute a state-space representation of the system~\cite{van1996subspace}.
Adaptation is achieved, e.g., by recursively computing the underlying singular value decomposition~\cite{verhaegen1991fast}, or by relying on techniques from the signal processing literature, such as direction of arrival estimation~\cite{lovera2000recursive,oku2002recursive} and the propagator method~\cite{mercere2007convergence,mercere2008propagator}.
These methods primarily address the problem of identifying \gls{LTI} systems recursively. 
To address time-variation of the system, a forgetting factor discounting older data is introduced~\cite{lovera2000recursive,verhaegen1991fast}, possibly in combination with a moving data window~\cite{mercere2008propagator,mercere2007convergence}.
Some of the methods are demonstrated on \gls{LTV} case studies~\cite{lovera2000recursive,mercere2008propagator}, and the derivations in~\cite{verhaegen1991fast} hold approximately for slowly varying systems.
However, rigorous guarantees are not provided for these algorithms when the system is time-varying.
In contrast, we estimate a non-parametric system model characterizing the set of input-output trajectories instead of a particular state-space representation.
This approach enables us to characterize the time-variation and uncertainty of the model through distances between subspaces, allowing for strong theoretical guarantees even for \gls{LTV} systems.

Subspace tracking methods are used extensively in the field of signal processing~\cite{delmas2010subspace,dung2021robust,balzano2018survey,bunch1978updating,yang1995PAST,he2012incremental,balzano2010online,xu2013gosus}, with a wide range of applications~\cite{dung2021robust,balzano2018survey}.
Various frameworks have been developed based on, e.g., incremental singular value decomposition~\cite{bunch1978updating}, recursive least squares method~\cite{yang1995PAST} or the Grassmannian optimization~\cite{he2012incremental,balzano2010online,xu2013gosus,zimmermann2018geometric}.
For some of these algorithms, theoretical guarantees have been derived, see~\cite{vaswani2018robust} for a detailed summary. 
Most of them are asymptotic, but finite-sample guarantees also exist in the literature~\cite{vaswani2017finite,zhang2016global}.
Given the literature's focus on signal processing, an important objective of the available subspace tracking methods is compressing large amounts of data.
Therefore, the dimension of the embedding space is typically large, while the subspace is lower dimensional.
Most of the available algorithms are tailored to reduce the computation complexity by exploiting this property.
However, we focus on subspace representations of deterministic linear dynamical systems~\cite{willems2005note}. 
For these systems, both the embedding space and the subspace are typically of moderate and comparable dimension.
Moreover, existing methods rely on assumptions of independence among samples, which do not hold in the case of dynamical systems where time series are inherently correlated.
Thus, these methods are inherently unsuitable for online system identification.
 
To address these issues, we adopt a geometric approach and view subspaces as points on the Grassmann manifold.
Tracking can be posed as an unconstrained optimization problem on the manifold~\cite{edelman1998geometry,boumal2023introduction,sepulchre2008optimization}.
Techniques to solve unconstrained problems, such as gradient descent, are well-understood~\cite{boyd2004convex,boumal2023introduction} and allow for a simple analysis.
This fact can be exploited to provide deterministic guarantees even in case the data are correlated, generated by linear time-varying dynamical systems, and subject to noise.
Contrary to parametric system identification methods, we optimize over non-parametric subspace models that uniquely define a dynamical system.
These models are coordinate-free, as they are independent of a basis.
By optimizing over subspaces directly in a coordinate-free manner, we obtain an unconstrained problem, avoiding non-convex orthogonality constraints that arise when explicitly optimizing over bases.

The contributions of our work are the following. 
We introduce the \acrfull{GREAT} algorithm for subspace tracking that is suitable for online identification of \gls{LTV} systems from data corrupted by bounded measurement error.
The method minimizes the projection error of an online data window onto the estimated subspace through gradient descent on the Grassmann manifold.
Assuming that the subspace's rate of change is bounded, we provide finite-sample convergence certificates for the algorithm, in the form of guaranteed convergence rate and bias.
The derived bound explicitly depends on the subspace's rate of change, the bound on the measurement error, and the persistency of excitation property of the online data.
In the special case of \gls{LTI} systems and exact measurements, the estimates converge to the true subspace exponentially fast.

The applicability of the \gls{GREAT} algorithm is demonstrated on a numerical example, in which an \gls{LTV} airplane model is identified from noisy data. 
We illustrate the predictive performance of the model obtained by our algorithm compared to two popular system identification methods.
First, we show that the \gls{GREAT} algorithm performs similarly to the well-known subspace identification method N4SID~\cite{van1994n4sid} recomputed at each time step to allow for adaptation.
Second, we benchmark it against a standard online identification method, that estimates a parametric model via the recursive least squares technique~\cite{ljung1999system}.
While the two methods achieve similar prediction error nominally, the proposed gradient-based method exhibits better robustness properties against large measurement errors.
Furthermore, we also compare our method to existing subspace tracking algorithms from the signal processing literature~\cite{balzano2010online,yang1995PAST}.
Finally, we study the conservatism of the theoretical bounds on a synthetic example.

The remainder of the paper is organized as follows. In Section~\ref{sec:prelim}, we introduce the notation and provide preliminaries on subspaces and manifold optimization. 
We formulate the problem and motivate it through online system identification in Section~\ref{sec:setup}.
The \gls{GREAT} algorithm and the corresponding theoretical analysis are presented in Section~\ref{sec:main}.
Section~\ref{sec:num-ex} contains the numerical examples, and Section~\ref{sec:conclusion} concludes the paper.
Most proofs are deferred to the Appendix.

\section{Preliminaries} \label{sec:prelim}
In this section, we introduce notation and differential-geometric concepts that are used later in the paper.

\subsection{Notation}
Let $\mathbb{N}$ ($\mathbb{R})$ denote the set of positive integers (real numbers), and $\mathbb{Z}_{\geq T}$ ($\mathbb{R}_{\geq T}$) is the set of integers (real numbers) greater than or equal to $T$.
We denote the trace of a square matrix $A$ by $\mathrm{tr}(A)$.
Given two matrices $A$ and $B$ of the same dimension, $\langle A,B\rangle_F = \mathrm{tr}(A^\top B)$ denotes the Frobenius inner product, and $\|A\|_F$ is the Frobenius norm of $A$. The $i$-th largest singular value of a matrix $A$ is denoted by $\sigma_i(A)$. 
The identity matrix of size $n\times n$ is $I_n$, and $\mathrm{diag}(d_1,d_2,\dots,d_n)$ denotes a diagonal matrix, with elements $d_1,d_2,\dots,d_n$ on the main diagonal. 
The $\ell_\infty$ norm of a sequence $\{f_t\}_{t\geq T}$ is defined as $\|f\|_\infty = \sup_{t\geq T} |f(t)|$. 
By $\mathcal{K}_\infty$ we denote the class of functions $g:\mathbb{R}_{\geq0} \to \mathbb{R}_{\geq0}$ that are continuous, strictly increasing, and satisfy $g(0) = 0$ and $\lim_{r\to\infty} g(r) = \infty$.

\subsection{Subspaces}
Throughout the paper, we denote subspaces by bold capital letters.
Any orthonormal matrix whose columns span the subspace is denoted by the same capital letter in regular font.
Consider a subspace $\mathbf{U}$ of dimension $d$ in $\mathbb{R}^n$ and one of its matrix representations $U\in\mathbb{R}^{n\times d}$ with $U^\top U = I_d$.
The orthogonal projection onto $\mathbf{U}$ and its orthogonal complement are defined as $P_{\mathbf{U}} = UU^\top$ and $P_{\mathbf{U}}^\perp = I_n - P_{\mathbf{U}}$, respectively.
Note that the projections are independent of the subspace representation $U$. 
Note also that $P_{\mathbf{U}} = P_{\mathbf{U}}^\top = P_{\mathbf{U}}^2$ and $\|P_{\mathbf{U}} M\| \leq \|M\|$ for any matrix $M$ and any matrix norm $\|\cdot\|$~\cite{golub2013matrix,horn2012matrix}.

Now consider another subspace $\mathbf{V}$ of the same dimension as $\mathbf{U}$, with representation $V\in\mathbb{R}^{n\times d},~V^\top V = I_d$. 
The principal angles between $\mathbf{U}$ and $\mathbf{V}$ are denoted by $0\leq \theta_1\leq \dots \leq \theta_d \leq \pi/2$, and can be computed using, e.g., the \gls{SVD}, see~\cite[Sec. 6.4.3]{golub2013matrix} for details.
The following metrics will be used to quantify the distance between two $d$-dimensional subspaces~\cite{ye2016schubert,padoan2022behavioral,edelman1998geometry,sepulchre2008optimization}
\begin{align*}
    & \text{Chordal:} \; d_{2}(\mathbf{U},\mathbf{V}) = \left ( \sum_{i=1}^d \sin^2 \theta_i \right)^{1/2} = \left(\mathrm{tr}(P_{\mathbf{U}}^\perp P_{\mathbf{V}})\right)^{1/2}, \\
    & \text{Gap:} \; d_\infty(\mathbf{U},\mathbf{V}) = \sin \theta_d = \|P_{\mathbf{U}} - P_{\mathbf{V}}\|_2.
\end{align*}
The inequality $d_\infty(\mathbf{U},\mathbf{V}) \leq d_{2}(\mathbf{U},\mathbf{V}) \leq \sqrt d \cdot d_{\infty}(\mathbf{U},\mathbf{V})$ follows from the definition of the metrics immediately. Furthermore, the following property of the gap metric is useful.
\begin{lemma} \label{lemma:gap_metric}
    For any subspaces $\mathbf{U},\mathbf{V}$ of dimension $d$ in $\mathbb{R}^n$ and any $u \in \mathbf{U}$, the following relation holds
    \begin{align*}
        \|P_{\mathbf{V}}^\perp u\|_2 \leq d_\infty(\mathbf{U},\mathbf{V})\cdot\|u\|_2.
    \end{align*}
\end{lemma}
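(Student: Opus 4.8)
The plan is to reduce everything to the identity $d_\infty(\mathbf{U},\mathbf{V}) = \|P_{\mathbf{U}} - P_{\mathbf{V}}\|_2$, which is recorded in the list of metrics above, together with the fact that $u \in \mathbf{U}$ is fixed by the projection onto $\mathbf{U}$.

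First I would observe that, since $u \in \mathbf{U}$ and $P_{\mathbf{U}} = UU^\top$ is the orthogonal projection onto $\mathbf{U}$, we have $P_{\mathbf{U}} u = u$. Next I would rewrite the left-hand side by inserting this identity:
\begin{align*}
    P_{\mathbf{V}}^\perp u = (I_n - P_{\mathbf{V}}) u = u - P_{\mathbf{V}} u = P_{\mathbf{U}} u - P_{\mathbf{V}} u = (P_{\mathbf{U}} - P_{\mathbf{V}}) u.
\end{align*}
Taking Euclidean norms and using the submultiplicativity of the spectral norm, $\|Ax\|_2 \le \|A\|_2 \|x\|_2$, yields
\begin{align*}
    \|P_{\mathbf{V}}^\perp u\|_2 = \|(P_{\mathbf{U}} - P_{\mathbf{V}}) u\|_2 \leq \|P_{\mathbf{U}} - P_{\mathbf{V}}\|_2 \, \|u\|_2 = d_\infty(\mathbf{U},\mathbf{V}) \, \|u\|_2,
\end{align*}
where the final equality is exactly the definition of the gap metric stated before the lemma. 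This closes the argument.

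There is no real obstacle here: the only thing to be careful about is that the two subspaces are assumed to have the same dimension $d$, which is precisely what makes $d_\infty(\mathbf{U},\mathbf{V}) = \|P_{\mathbf{U}} - P_{\mathbf{V}}\|_2$ a valid characterization via the largest principal angle $\sin\theta_d$; once that is in hand, the proof is a two-line computation. If one preferred not to invoke that characterization directly, an alternative would be to expand $\|P_{\mathbf{U}} - P_{\mathbf{V}}\|_2$ in terms of the principal angles through the \gls{SVD} of $U^\top V$, but that is unnecessary given the identity already available in the excerpt.
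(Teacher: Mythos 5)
Your proof is correct and follows exactly the same route as the paper's: insert $P_{\mathbf{U}}u = u$ to rewrite $(I_n - P_{\mathbf{V}})u$ as $(P_{\mathbf{U}} - P_{\mathbf{V}})u$, then bound by the operator norm and invoke the identity $d_\infty(\mathbf{U},\mathbf{V}) = \|P_{\mathbf{U}} - P_{\mathbf{V}}\|_2$. Your write-up simply makes explicit the steps the paper leaves implicit.
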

\begin{proof}
    \begin{align*}
        \|(I_n-P_{\mathbf{V}}) u\|_2 = \|(P_{\mathbf{U}}-P_{\mathbf{V}}) u\|_2 \leq \|P_{\mathbf{U}}-P_{\mathbf{V}}\|_2 ~\|u\|_2.
    \end{align*}
\end{proof}

\subsection{Grassmannian geometry}
This section recalls some concepts related to the \textit{Grassmann manifold} that are used throughout the paper.
Formal definitions are outside the scope of this paper, and the interested reader is referred to~\cite{boumal2023introduction} for a thorough exposition.
Informally, the Grassmann manifold is the set of all subspaces in $\mathbb{R}^n$ of a given dimension $d$:
\begin{align*}
    \Grass{n}{d} = \left \{\text{subspaces of dimension $d$ in }\mathbb{R}^n\right \}.
\end{align*}
The tangent space associated to each point on the manifold $\subs{}\in\Grass{n}{d}$ is denoted  by $T_{\subs{}}\Grass{n}{d}$ (see~\cite[Def. 8.33]{boumal2023introduction} for a formal definition).
The tangent bundle of a manifold is the disjoint union of its tangent spaces, endowed with a smooth manifold structure, defined as
\begin{align*}
    T\Grass{n}{d} = \left \{(\subs{},\mathbf{V})~|~\subs{}\in\Grass{n}{d}, \mathbf{V} \in T_{\subs{}}\Grass{n}{d} \right \}.
\end{align*}

Points on the Grassmann manifold are abstract objects.
For computation purposes, we represent a point on $\Grass{n}{d}$ by a matrix in $\mathbb{R}^{n\times d}$, whose columns are orthonormal and span the corresponding subspace\footnote{In fact, the set $\left \{U\in\mathbb{R}^{n\times d}~|~U^\top U = I_d\right \}$ endowed with a differentiable structure is called the Stiefel manifold, and it is an embedded submanifold of the linear space $\mathbb{R}^{n\times d}$. The Grassmannian can be viewed as a quotient manifold of the Stiefel manifold. More details about quotient manifolds and the Grassmannian can be found in~\cite[Chp. 9]{boumal2023introduction}.}.
Furthermore, for any representation $U\in\mathbb{R}^{n\times d},~U^\top U = I_d$ of a point $\subs{} \in \Grass{n}{d}$, all tangent vectors $\mathbf{V} \in T_{\subs{}}\Grass{n}{d}$ admit a representation $V\in\mathbb{R}^{n\times d}$ that satisfies $\subsST{}^\top V = 0$~\cite{boumal2023introduction}.

The Grassmann manifold can be equipped with a Riemannian metric using the Frobenius inner product making it a Riemannian manifold\footnote{More precisely, the Frobenius inner product defines a Riemannian metric on the Stiefel manifold, which induces a Riemannian metric on the Grassmann manifold~\cite[Sec.~9.7]{boumal2023introduction}.}.
Given a differentiable function $f$ that maps from $\Grass{n}{d}$ to $\mathbb{R}$, the Riemannian gradient of $f$ denoted by $\mathrm{grad}~ f$ is a vector field on the manifold. 
The gradient associates a point $\subs{} \in \Grass{n}{d}$ with a tangent vector $\mathbf{V}\in T_{\subs{}} \Grass{n}{d}$, i.e., $\mathbf{V} = \mathrm{grad}~f(\subs{})$.
A representation of $\mathbf{V}$ can be calculated as~\cite[Sec.~9.16]{boumal2023introduction}
\begin{align} \label{eq:riemannian_grad}
    V = P_{\subs{}}^\perp \nabla \bar{f}(\subsST{}),
\end{align}
where $\bar{f}:\mathbb{R}^{n\times d}\to\mathbb{R}$ is such that $\bar{f}(\subsST{}) = f(\subs{})$ for any orthonormal matrix representation $\subsST{}$ spanning the subspace $\subs{}$, and $\nabla \bar{f}(\subsST{})$ is the Euclidean gradient of $\bar{f}$ evaluated at $\subsST{}$.

In manifold optimization, various maps (called retractions) are used to move along the manifold in tangent directions. 
In this work, we use the \textit{exponential map} due to its desirable theoretical properties. 
The exponential map $\mathrm{Exp}_{\mathbf{U}}(\mathbf{V})$ maps an element of the tangent bundle $(\subs{},\mathbf{V}) \in T\Grass{n}{d}$ to another element of the manifold $\subs{}^+ \in \Grass{n}{d}$, i.e., $\subs{}^+ = \mathrm{Exp}_{\mathbf{U}}(\mathbf{V})$.
Consider a representation $(U,V)$ of $(\subs{},\mathbf{V})$,
and let $V = Q_1 S Q_2^\top$ denote the compact \gls{SVD}, where $Q_1\in\mathbb{R}^{n\times d}$, $Q_2\in \mathbb{R}^{d\times d}$, and $S = \mathrm{diag}(\sigma_1,\sigma_2,\dots,\sigma_d)\in \mathbb{R}^{d\times d}$ with $\sigma_i$ denoting the $i$-th singular value of $V$.
Note that $Q_1\in\mathbb{R}^{n\times d}$ only contains columns corresponding to the non-zero singular values.
Then, a representation of $\subs{}^+$ can be computed using the formula~\cite[Eq. (2.65)]{edelman1998geometry}
\begin{align*}
    \subsST{}^+ = [U Q_2~Q_1] 
    \begin{bmatrix}
    \mathrm{diag}(\cos(\sigma_1),\cos(\sigma_2),\dots,\cos(\sigma_d)) \\ \mathrm{diag}(\sin(\sigma_1),\cos(\sigma_2),\dots,\cos(\sigma_d))
    \end{bmatrix}
    Q_2^\top.
\end{align*}

In the remainder of the paper, with a slight abuse of notation, we provide formulas for points in $\Grass{n}{d}$ and $T_{\subs{}}\Grass{n}{d}$ using an arbitrary orthonormal matrix representation of them.
However, this does not undermine consistency, since the quantities we work with, such as distances or projections, are coordinate-free notions defined for subspaces, not their representations.
Consequently, these quantities are invariant under the choice of representation.

\section{Problem setup} \label{sec:setup}
We now describe the problem setup, whose relevance is illustrated later on through the example of \gls{LTV} system identification.

\subsection{Problem formulation} \label{sec:problem}
We consider the problem of recursively estimating an unknown and time-varying subspace $\subs{t} \in \Grass{n}{d}$ based on possibly noisy data, collected online. 
We assume that a sample $\traj{}{t} \in \mathbb{R}^n$ is available at each time, consisting of a nominal part $\ntraj{}{t}$ from $\subs{t}$ corrupted by some measurement error $e_t$. The measurement error is assumed to be bounded, as formalized below.
\begin{assum} \label{ass:meas_err}
    Each sample $\traj{}{t} \in \mathbb{R}^n$ can be decomposed as
    \begin{align*}
        \traj{}{t} = \ntraj{}{t} + e_t,
    \end{align*}
    with $\ntraj{t-L+1}{t}\in\subs{t}$ and $\|e_t\|_2 \leq \epsilon$ for all $t\in\mathbb{N}$.
\end{assum}
Note that we do not make any (probabilistic) assumption on the distribution of the noise $e_t$, which is often unknown.
Instead, we use the upper bound $\epsilon$ from Assumption~\ref{ass:meas_err} to provide worst-case (deterministic) guarantees for the estimate obtained by the algorithm proposed in the subsequent section.

In addition, we assume an upper bound on the subspace's temporal variability.
\begin{assum} \label{ass:LTV-Lipschitz}
    There exists some constant $c\geq0$ such that $d_{2}(\subs{t},\subs{t+1}) \leq c$ holds for all $t \in \mathbb{N}$.
\end{assum}
In case $c=0$, the subspace is time-invariant, i.e., $U_{t+1} = U_{t}$ for all $t\in\mathbb{N}$.

Our estimate of the subspace at time $t\in\mathbb{N}$ is denoted by $\estim{t} \in \Grass{n}{d}$.
The uncertainty of this estimate is quantified by an upper bound on its distance to the true subspace $\subs{t}$.
Therefore, uncertainty sets are in the form of metric balls of radius $r$ centered around some subspace $\subs{}\in\Grass{n}{d}$, defined as
\begin{align*}
    \mathbb{B}_r(\subs{}) := \left \{\estim{}\in\Grass{n}{d}~|~d_{2}(\subs{},\estim{}) \leq r \right \}.
\end{align*}
The problem of interest can now be formulated as follows.
\begin{problem} \label{problem}
    Given an initial estimate $\estim{t_0}$ at time $t_0\in\mathbb{N}$ with $\subs{t_0} \in \mathbb{B}_{r_0}(\estim{t_0})$ for some $r_0$ and data $u_t$, $t\in\mathbb{N}$, find an iterative algorithm that
    \begin{itemize}
        \item provides \textbf{uncertainty quantification} in terms of an invariant tube, that is $\subs{t} \in \mathbb{B}_{r_t}(\estim{t})$ for all $t > t_0$,
        \item yields \textbf{diminishing bias}, that is, $\lim_{t\to\infty} r_t = 0$ if $c=0$ and $\epsilon =0$, and
        \item is \textbf{recursive}, that is, $\estim{t+1}$ depends only on $\estim{t}$ and the past length-$T$ data window $u_\tau,~\tau\in\{t-\windowlength+1,\dots,\\t-1,t\}$ for some $1\leq \windowlength \leq t_0 + 1$.
    \end{itemize}
\end{problem}
The uncertainty set is called an \textit{invariant tube}, as the sequence of metric balls $\{\mathbb{B}_{r_t}(\estim{t})\}_{t>t_0}$ defines a tubular neighborhood around the estimates, in which the true subspaces are guaranteed to remain for all times $t>t_0$.
In this work, we focus on methods that adapt the estimate based on a window of most recent data to simplify the analysis. 
However, achieving adaptation by introducing a forgetting factor that discounts the past data is also possible in the proposed framework, see Remark~\ref{rem:forgetting_factor} in Section~\ref{sec:tracking}.

\subsection{Motivating application: online system identification} \label{sec:mot_ex}
We illustrate the relevance of our problem setup through the example of identifying \gls{LTV} systems in the framework of behavioral systems theory~\cite{willems1997introduction}.
Consider first a state-space representation of a time-varying linear system describing the relationship between inputs and outputs as
\begin{align} 
\begin{split} \label{eq:TV-SS}
    x_{t+1} & = A_t x_t + B_t v_t, \\
    y_t & = C_t x_t + D_t v_t,
\end{split}
\end{align}
with state $x_t \in \mathbb{R}^\sysorder$ and time-dependent matrices $A_t\in \mathbb{R}^{\sysorder \times \sysorder},~B_t \in \mathbb{R}^{\sysorder\times m},~C_t\in \mathbb{R}^{p\times \sysorder}$ and $D_t \in \mathbb{R}^{p\times m}$. 
The input and output trajectories on the time interval $[t,t+L]$ are defined as $v_{[t,t+L]} = [v_t^\top~\dots~v_{t+L}^\top]^\top$ and $y_{[t,t+L]} = [y_t^\top~\dots~y_{t+L}^\top]^\top$, and they can be expressed using~\eqref{eq:TV-SS} in the form
\begin{align} \label{eq:state-space_subspace}
    \begin{bmatrix}
        v_{[t,t+L]} \\ y_{[t,t+L]}
    \end{bmatrix} = \underbrace{\begin{bmatrix}
        0 & I_{m(L+1)} \\ \mathcal{O}_{[t,t+L]} & \mathcal{T}_{[t,t+L]}
    \end{bmatrix}}_{=:\Lambda_t}
    \begin{bmatrix}
        x_t \\ v_{[t,t+L]}
    \end{bmatrix},
\end{align}
where the matrices $\mathcal{O}_{[t,t+L]}\in\mathbb{R}^{p(L+1) \times k}$ and $\mathcal{T}_{[t,t+L]} \in \mathbb{R}^{p(L+1)\times m(L+1)}$ are defined as
\footnotesize
\begin{align*}
  & \mathcal{O}_{[t,t+L]} = \begin{bmatrix}
    C_t \\
    C_{t+1}A_{t} \\
    C_{t+2}A_{t+1}A_t \\
    \vdots \\
    C_{t+L} A_{t+L-1} \cdots A_t
  \end{bmatrix}, \\
  & \mathcal{T}_{[t,t+L]} = \\
  & \begin{bmatrix}
      D_t & 0 & 0 & \dots & 0 \\
        C_{t+1}B_t & D_{t+1} & 0 & \dots & 0 \\
        C_{t+2}A_{t+1}B_t & C_{t+2}B_{t+1} & D_{t+2} &\dots  & 0\\
        \vdots & \vdots & \vdots & \ddots & \vdots \\
        C_{t+L}A_{t+L-1}\dots A_{t+1}B_t & \dots & \dots & \dots & D_{t+L}   
  \end{bmatrix}.
\end{align*}
\normalsize
The matrix $\mathcal{O}_{[t,t+L]}$ is related to the observability of the system~\cite{rugh1996linear}. 
If it has full column rank, the matrix $\Lambda_t$ has full rank, i.e., it spans a subspace of dimension $\sysorder+m(L+1)$ for all $t\in \mathbb{N}$.
Techniques for estimating the system order~$\sysorder$, and hence, the dimension of the subspace representation, are available for \gls{LTI} systems in the subspace identification literature~\cite{van1996subspace}.

Note that existing works on (recursive) subspace identification~\cite{van1996subspace,verhaegen1991fast,oku2002recursive,lovera2000recursive,mercere2007convergence,mercere2008propagator} estimate the subspace spanned by $\mathcal{O}_{[t,t+L]}$ first, and then compute the state-space matrices $A_t,B_t,C_t$ and $D_t$.
Instead of this approach, we focus on identifying the set of all input-output trajectories on the interval $[t,t+L]$, which is called the \textit{restricted behavior}~\cite{markovsky2023identifiability}.
The theoretical foundations of this system description is the subject of behavioral systems theory~\cite{willems1997introduction}.
The restricted behavior of the \gls{LTV} system represented by~\eqref{eq:TV-SS} is defined as
\begin{align*}
    & \mathcal{B}_{[t,t+L]} = \left \{
    \begin{bmatrix} 
        v_{[t,t+L]} \\
        y_{[t,t+L]}
    \end{bmatrix}
    \in \mathbb{R}^{(p+m)\cdot(L+1)} ~\Big | \right . \\
    & \quad \left. \vphantom{\begin{bmatrix} 
        v_{[t,t+L]} \\
        y_{[t,t+L]}
    \end{bmatrix}}
    \exists x_t \text{ such that } x_{[t,t+L]},v_{[t,t+L]}, y_{[t,t+L]} \text{ satisfy~\eqref{eq:TV-SS}} \right \}.
\end{align*}
Due to the linearity of equation~\eqref{eq:state-space_subspace}, the restricted behavior $\mathcal{B}_{[t,t+L]}$ is a subspace spanned by the columns of $\Lambda_t$.
For \gls{LTI} systems, this subspace is time-invariant, i.e., $\mathcal{B}_{[t,t+L]} \equiv \mathcal{B}$ for all $t\in\mathbb{N}$, and input-output data can be used directly to form a basis of $\mathcal{B}$.
Suppose that input and output measurements at $t=\{1,2,\dots,T_d\}$ are available offline with $T_d\gg L$.
Let us construct the Hankel matrix of the inputs (and similarly of the outputs) as
\begin{equation*}
   \mathcal{H}_{L+1}(v) = \begin{bmatrix}
       v_1 & v_2 & \dots & v_{T_d-L} \\
       v_2 & v_3 & \dots & v_{T_d-L+1} \\
       \vdots & \vdots & \ddots & \vdots \\
       v_{L+1} & v_{L+2} & \dots & v_{T_d}
   \end{bmatrix}.
\end{equation*}
In case the input is persistently exciting, the block Hankel matrix $\begin{bmatrix}
    \mathcal{H}_{L+1}(v) \\ \mathcal{H}_{L+1}(y)
\end{bmatrix}$ spans the behavior $\mathcal{B}$~\cite{willems2005note}.
However, the system may be time-varying in our setup, and therefore, the restricted behavior does not attain a direct data representation.

In light of the above, online identification of \gls{LTV} systems in the behavioral setting is equivalent to tracking the time-varying subspace $\mathcal{B}_{[t,t+L]}$, and thus, it can be formulated as Problem~\ref{problem}.
At each time $t$, a sample $\traj{}{t} = [v_{[t-L,t]}^\top~y_{[t-L,t]}^\top]^\top$ from the subspace $\subs{t} = \mathcal{B}_{[t-L,t]}$ can be constructed from the most recent input-output measurements.
Note that $\mathcal{B}_{[t-L,t]}$ is a special time-varying subspace that represents the restricted behavior of an LTV system.
However, we state the main results of our work in Section~\ref{sec:main} for general time-varying subspaces.

The proposed problem formulation allows us to handle additive measurement error on the inputs and outputs, c.f., Assumption~\ref{ass:meas_err}.
This error-in-variables formulation is used in many domains, such as identifying electrical circuits~\cite{brouillon} or financial models~\cite{MADDALA1996507}.
Considering this setup is also natural in the behavioral framework, since inputs and outputs are not distinguished in this setting~\cite{willems1997introduction}.
However, the input is often known exactly in the context of control for example.
The upper bound on the distance between two consecutive subspaces in Assumption~\ref{ass:LTV-Lipschitz} naturally quantifies how fast the behavior changes.
On the other hand, characterizing the time-variation of the system by the change in $A_t,B_t,C_t$ is problematic, as these parameters are basis dependent.
Nevertheless, if the matrices $A_t,B_t,C_t$ and $D_t$ vary slowly with time, the change in the matrix on the right-hand side of equation~\eqref{eq:state-space_subspace} is small, and hence, $d_2(\mathcal{B}_{[t,t+L]},\mathcal{B}_{[t+1,t+L+1]})$ can be bounded (c.f.,~\cite{wedin1972perturbation}).
The uncertainty characterization of the identified behavior is useful for various downstream tasks, such as prediction, estimation, or control.
For example, it can be used to quantify the uncertainty of predicted trajectories, or to robustify control formulations.
Besides describing time-varying systems, \gls{LTV} models are also often used to approximate the behavior of nonlinear systems~\cite{verhoek2023direct,berberich2022linear}.

Even though the state-space representation~\eqref{eq:TV-SS} can be related to the restricted behavior through the equation~\eqref{eq:state-space_subspace}, we do not aim at recovering the matrices $A_t,B_t,C_t$ and $D_t$ to form an estimate of $\mathcal{B}_{[t,t+L]}$.
Instead, we use the subspace estimate and its uncertainty quantification directly to perform downstream tasks, such as prediction, which is described below and illustrated in Section~\ref{sec:ex2}.
We divide the input-output trajectories into initial and future parts of length $T_{\mathrm{ini}}$ and $T_{\mathrm{fut}}$, respectively, i.e., $v_{[t-T_{\mathrm{ini}}+1,t+T_{\mathrm{fut}}]}=[v_\mathrm{ini}^\top~v_\mathrm{fut}^\top]^\top$ and $y_{[t-T_{\mathrm{ini}}+1,t+T_{\mathrm{fut}}]}=[y_\mathrm{ini}^\top~y_\mathrm{fut}^\top]^\top$.
Let $\estimST{}$ be a basis for the estimated behavior $\mathcal{B}_{[t-T_{\mathrm{ini}}+1,t+T_{\mathrm{fut}}]}$.
We denote the block-rows of $\estimST{}$ corresponding to each trajectory component by $\estimST{}^{v_\mathrm{ini}},\estimST{}^{y_\mathrm{ini}},\estimST{}^{v_\mathrm{fut}}$, and $\estimST{}^{y_\mathrm{fut}}$.
Given $v_\mathrm{ini}, y_\mathrm{ini}, v_\mathrm{fut}$, the future outputs at times $t+1,\dots,t+ T_\mathrm{fut}$ are predicted as $\hat{y}_\mathrm{fut} = M [v_\mathrm{ini}^\top~y_\mathrm{ini}^\top~ v_\mathrm{fut}^\top]^\top$, with $M$ defined as
\begin{align*}
    M := \estimST{}^{y_\mathrm{fut}} 
    \begin{bmatrix}
        \estimST{}^{v_\mathrm{ini}} \\ \estimST{}^{y_\mathrm{ini}} \\
        \estimST{}^{v_\mathrm{fut}}
    \end{bmatrix}^\dagger,
\end{align*}
where $\dagger$ denotes the Moore-Penrose pseudoinverse.
This linear multi-step predictor originates from the classical and widely adopted Subspace Predictive Control~\cite{favoreel1999spc,markovsky2021behavioral}.

\section{Subspace tracking with guarantees} \label{sec:main}
Our main results are presented in this section. 
In Section~\ref{sec:SNR}, we discuss how the signal in the data can be distinguished from noise.
Then, we introduce the gradient descent method in Section~\ref{sec:GGD} that serves as the main building block of the subspace tracking algorithm proposed in Section~\ref{sec:tracking}.

\subsection{Distinguishing between signal and noise} \label{sec:SNR}
In this section, we define and analyze the signal and noise components of the online data.
We consider a data window of length $T\geq d$ consisting of the most recent samples up to time $t$ organized into a matrix
\begin{align*}
    W_t = [\traj{}{t-\windowlength+1}~ \traj{}{t-\windowlength+2}~ \dots ~\traj{}{t}] \in \mathbb{R}^{n \times \windowlength}, \quad t\in\mathbb{Z}_{\geq \windowlength}.
\end{align*}
Our goal is to estimate the current subspace $\subs{t}$ using the data matrix $W_t$.
Note that in case the samples $\traj{}{t}$ are generated by an \gls{LTI} system, $W_t$ is a block Hankel matrix and its columns span the restricted behavior $\mathcal{B}_{[t,t+L]}$ (see Section~\ref{sec:mot_ex} and Remark~\ref{rem:Fundamental_lemma} for details).
However, when the system is time-varying, only the last sample $u_t$ originates from $\subs{t}$.
Therefore, instead of using the nominal sample $\bar{u}_t$ and the measurement error $e_t$ from Assumption~\ref{ass:meas_err}, we characterize the signal and noise components of $W_t$ by introducing the following decomposition:
\begin{align*}
    W_t = P_{\subs{t}} W_t + P_{\subs{t}}^\perp W_t.
\end{align*}
The first term lies in $\subs{t}$ and is defined to be the signal, while the second term lies in the orthogonal complement of $\subs{t}$ and is considered as noise (see Figure~\ref{fig:SNR} for an illustration).
Under the stated assumptions, the noise term can be bounded as follows. The proof can be found in the Appendix.
\begin{lemma}[Noise bound] \label{lemma:LTV-Hankel-noisy}
    Let Assumptions~\ref{ass:meas_err} and \ref{ass:LTV-Lipschitz} hold. Then, 
    \begin{align}
    \begin{split} \label{eq:delta_def}
            \|P_{\subs{t}}^\perp {W}_t\|_F \leq & \delta_t, \quad \forall t \geq T,
            \end{split}
    \end{align}
    where $\delta_t \geq 0$ is strictly monotone in $\epsilon$ and $c$, defined as
    \begin{align} \label{eq:delta_def_true}
        \delta_t:= c~ \|W_tD\|_F + \epsilon \sqrt{\windowlength} \left( c (\windowlength-1) + 1 \right),
    \end{align}
    with $D:=\mathrm{diag}(\windowlength-1,\windowlength-2,\dots,0)$.
\end{lemma}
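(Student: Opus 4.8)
The plan is to bound $\|P_{\subs{t}}^\perp W_t\|_F$ column by column. The $j$-th column of $W_t$ is $u_{t-T+j}$ for $j=1,\dots,T$; writing $\tau = t-T+j$, we have $u_\tau = \bar u_\tau + e_\tau$ with $\bar u_\tau \in \subs{\tau}$ and $\|e_\tau\|_2\le\epsilon$ by Assumption~\ref{ass:meas_err}. Since $P_{\subs{t}}^\perp$ is a projection, $\|P_{\subs{t}}^\perp e_\tau\|_2 \le \epsilon$, so the main work is estimating $\|P_{\subs{t}}^\perp \bar u_\tau\|_2$, i.e.\ how far the nominal sample from $\subs{\tau}$ sits from the \emph{current} subspace $\subs{t}$. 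The natural idea is a telescoping argument: $\bar u_\tau \in \subs{\tau}$, and $\subs{\tau}$ differs from $\subs{t}$ through the chain $\subs{\tau},\subs{\tau+1},\dots,\subs{t}$, each consecutive pair being $d_2$-close (hence $d_\infty$-close) by Assumption~\ref{ass:LTV-Lipschitz}.

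Concretely, I would apply Lemma~\ref{lemma:gap_metric} iteratively. For a vector $w\in\subs{s}$, Lemma~\ref{lemma:gap_metric} gives $\|P_{\subs{s+1}}^\perp w\|_2 \le d_\infty(\subs{s},\subs{s+1})\|w\|_2 \le c\,\|w\|_2$, and also $\|P_{\subs{s+1}} w\|_2 \le \|w\|_2$. Decomposing $w = P_{\subs{s+1}} w + P_{\subs{s+1}}^\perp w$ and iterating from $s=\tau$ up to $s=t$, the component of $\bar u_\tau$ that is "carried along" inside the moving subspace stays bounded by $\|\bar u_\tau\|_2$, while at each of the $t-\tau$ steps a residual of size at most $c$ times the current norm is shed into an orthogonal complement; after the last step these residuals, measured against $\subs{t}$, accumulate additively. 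This yields a bound of the form $\|P_{\subs{t}}^\perp \bar u_\tau\|_2 \le c(t-\tau)\|\bar u_\tau\|_2$ (to first order; one must be slightly careful whether the cleanest bookkeeping gives exactly $c(t-\tau)$ or a $1-(1-c)^{t-\tau}$-type expression that one then upper-bounds by $c(t-\tau)$). Writing $\bar u_\tau = u_\tau - e_\tau$ and using $\|\bar u_\tau\|_2 \le \|u_\tau\|_2 + \epsilon$ together with $t-\tau = T-j$, the $j$-th column contributes at most $c(T-j)(\|u_{t-T+j}\|_2 + \epsilon) + \epsilon$ to $\|P_{\subs{t}}^\perp W_t\|_2$ columnwise.

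Summing the squared column bounds (or, more simply, using that the Frobenius norm of a matrix is at most the $\ell_2$-norm of the vector of column norms) and separating the terms: the $c(T-j)\|u_{t-T+j}\|_2$ contributions assemble into $c\,\|W_t D\|_F$ with $D = \mathrm{diag}(T-1,T-2,\dots,0)$, since the $j$-th column of $W_t D$ is exactly $(T-j)u_{t-T+j}$; the remaining $c(T-j)\epsilon + \epsilon = \epsilon(c(T-j)+1)$ terms, of which there are $T$ and the largest is $\epsilon(c(T-1)+1)$, sum in Frobenius norm to at most $\epsilon\sqrt{T}\,(c(T-1)+1)$. Adding the two pieces gives precisely $\delta_t = c\,\|W_t D\|_F + \epsilon\sqrt{T}\,(c(T-1)+1)$, and strict monotonicity in $\epsilon$ and $c$ is immediate from the closed form. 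The main obstacle is the telescoping step: one has to track that the projections are onto different subspaces at each stage and argue carefully that the in-subspace component never grows, so that the shed residuals are the only source of error and they combine with the triangle inequality rather than multiplicatively — getting the constant to come out as the clean $c(T-j)$ appearing in $D$, rather than something looser, is where the care is needed.
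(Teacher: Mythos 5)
Your proposal is correct and reaches exactly the bound $\delta_t$, but it takes a different route on the one step you flag as the ``main obstacle.'' The paper avoids vector-level telescoping entirely: it applies Lemma~\ref{lemma:gap_metric} \emph{once} to the pair $(\subs{t-i},\subs{t})$, giving $\|P_{\subs{t}}^\perp \ntraj{}{t-i}\|_2 \le d_\infty(\subs{t},\subs{t-i})\,\|\ntraj{}{t-i}\|_2$, and then bounds $d_\infty(\subs{t},\subs{t-i}) \le d_2(\subs{t},\subs{t-i}) \le \sum_{k=0}^{i-1} d_2(\subs{t-k},\subs{t-k-1}) \le ic$ using the fact that $d_2$ is a metric and hence satisfies the triangle inequality. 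That is, the telescoping happens at the level of the chordal distance, not at the level of the sample, which sidesteps all the bookkeeping you were worried about. Your iterated-projection argument does close cleanly, for what it is worth: setting $w_0 = \ntraj{}{\tau}$ and $w_{i+1} = P_{\subs{\tau+i+1}} w_i$, each $w_i$ lies in $\subs{\tau+i}$ with $\|w_i\|_2 \le \|w_0\|_2$ (projections are nonexpansive), so each shed residual $r_i = P_{\subs{\tau+i+1}}^\perp w_i$ satisfies $\|r_i\|_2 \le c\,\|w_0\|_2$ by Lemma~\ref{lemma:gap_metric}; since $w_{t-\tau}\in\subs{t}$ is annihilated by $P_{\subs{t}}^\perp$, the triangle inequality over the $t-\tau$ residuals gives exactly $\|P_{\subs{t}}^\perp \ntraj{}{\tau}\|_2 \le c\,(t-\tau)\,\|\ntraj{}{\tau}\|_2$ with no geometric-series slack. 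The remainder of your argument (substituting $\ntraj{}{\tau} = \traj{}{\tau} - e_\tau$, assembling the $c(T-j)\|\traj{}{\tau}\|_2$ terms into $c\,\|W_tD\|_F$, and summing the $\epsilon(c(T-j)+1)$ terms to $\epsilon\sqrt{T}(c(T-1)+1)$ via Minkowski) is the columnwise version of what the paper does with matrix-level triangle inequalities and $\|E_tD\|_F \le \|E_t\|_F\|D\|_2$; the two assemblies are equivalent. The paper's metric-level telescoping is the cleaner choice here, but your version is self-contained in that it never needs the triangle inequality for $d_2$.
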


Lemma~\ref{lemma:LTV-Hankel-noisy} guarantees that the noise component $P_{\subs{t}}^\perp W_t$, which captures both the (slowly) time-varying nature of the subspace and the effect of measurement error, is bounded by the function $\delta_t$.
To be able to guarantee positive convergence rate for the updates in the subsequent section, we also impose bounds on the singular values of the signal component $P_{\subs{t}} W_t$.
\begin{assum} \label{ass:PE}
    There exist constants $\underline{\sigma}$ and $\overline{\sigma}$ where $0 < \underline{\sigma} \leq \overline{\sigma}$ such that the matrix $W_t$ satisfies $\sigma_1(P_{\subs{t}} W_t) \leq \overline{\sigma}$ and $\sigma_d(P_{\subs{t}} W_t) \geq \underline{\sigma}$ for all $t\in\mathbb{Z}_{\geq \windowlength}$.
\end{assum}
Choosing $\overline{\sigma}$ as any uniform upper bound on $\|W_t\|_2$ trivially satisfies the assumption, since $\|W_t\|_2 = \sigma_1(W_t) \geq \sigma_1(P_{\subs{t}} W_t)$.
Furthermore, the lower bound ensures that the signal term $P_{\subs{t}}W_t$ spans $\subs{t}$ as illustrated in Figure~\ref{fig:SNR}.

\begin{figure}
    \centering
    \tdplotsetmaincoords{60}{120}

\begin{tikzpicture}[tdplot_main_coords]

\def\plane{3.75}
\def\planex{-2.5}
\def\planey{-3.75}
\filldraw[myBlue!30] (\planex,\planey,0) -- (-\planex,\planey,0) -- (-\planex,\plane,0) -- (\planex,\plane,0) -- cycle;

\draw[thick,->] (0,0,0) -- (4,0,0) node[anchor=north east]{};
\draw[thick,->] (0,0,0) -- (0,4,0) node[anchor=north west]{};
\draw[thick,->] (0,0,0) -- (0,0,4) node[anchor=west]{$\mathbf{U}^\perp$};

\def\pex{2}
\def\pey{2}
\def\pkx{1}
\def\pky{3}

\coordinate (P1) at (\pex, \pey, 4);
\coordinate (P2) at (\pkx, \pky, 2.75);

\coordinate (P1_proj) at (\pex, \pey, 0);
\coordinate (P2_proj) at (\pkx, \pky, 0);

\filldraw[black] (P1) circle (2pt) node[anchor=south west, black]{$u_1$};
\filldraw[black] (P2) circle (2pt) node[anchor=south west, black]{$u_2$};

\filldraw[black] (P1_proj) circle (2pt) node[below, black]{$P_{\mathbf{U}} u_1$};
\filldraw[black] (P2_proj) circle (2pt) node[right, black]{$P_{\mathbf{U}} u_2$};

\draw[dotted,thick] (P1) -- (P1_proj);
\draw[dotted,thick] (P2) -- (P2_proj);

\coordinate (ax1) at (-2.0733, -3.5726, 0);
\coordinate (ax2) at (0.8376, -0.4861, 0);

\draw[thick, black, samples=100, smooth, dashed, domain=0:360] 
    plot ({\pex*cos(\x) + \pkx*sin(\x)}, {\pey*cos(\x) + \pky*sin(\x)}, 0);

\draw[black] (0,0,0) -- (ax1) node[below=20pt,right=5pt]{$\sigma_1$};
\draw[black] (0,0,0) -- (ax2) node[above=8pt,right=-5pt]{$\sigma_2$};

\node[below] at (1.5,-2,0) {$\mathbf{U}$};

\end{tikzpicture}
    \caption{Illustration of the signal-to-noise ratio properties on a two dimensional subspace $\mathbf{U}$ in $\mathbb{R}^3$ (shaded). Two data samples, $u_1$ and $u_2$ are depicted along with their noise components (dotted lines). 
    The signal component is $P_{\subs{}}W$ with $W=[u_1~u_2]$ and singular values $\sigma_1$ and $\sigma_2$.
    The image and the 2-norm condition of the matrix $P_{\subs{}}W$ are illustrated by the dashed ellipse.
    Note that, since $\sigma_2>0$, the matrix $P_{\subs{}}W$ spans the whole subspace $\mathbf{U}$, i.e., the ellipse lies in the plane~$\mathbf{U}$.
    }
    \label{fig:SNR}
\end{figure}

\begin{remark} \label{rem:Fundamental_lemma}
Lemma~\ref{lemma:LTV-Hankel-noisy} and Assumption~\ref{ass:PE} are related to system identification described in Section~\ref{sec:mot_ex} as follows.
Each sample $\traj{}{t}$ is an $L+1$ step long input-output trajectory $[v_{[t-L,t]}^\top~y_{[t-L,t]}^\top]^\top$, and therefore, $W_t = \begin{bmatrix}
    \mathcal{H}_{L+1}(v) \\\mathcal{H}_{L+1}(y)
\end{bmatrix}$.
Suppose that the data is generated by an \gls{LTI} system, implying $\subs{t} = \subs{t+1} = \subs{}$ for all $t\in \mathbb{Z}_{\geq \windowlength}$, and therefore, $c=0$.
Assume further that the input-output data are measured exactly, i.e., $\epsilon=0$.
Moreover, let the system be observable implying that $\subs{}$ is of dimension $\sysorder+m(L+1)$.
Under these assumptions, the bound in Lemma~\ref{lemma:LTV-Hankel-noisy} reduces to $\delta_t \equiv 0$.
Consequently, it holds that $\|P_{\subs{}}^\perp {W}_t\|_F=0$, i.e., all measured trajectories of length $L+1$ must lie in the subspace $\subs{}$ of dimension $\sysorder+m(L+1)$, which is in line with the linearity and time-invariance of the system.
Furthermore, the condition $\sigma_d(P_{\subs{t}} W_t) \geq \underline{\sigma} >0$ from Assumption~\ref{ass:PE} ensures that the singular values of the Hankel matrix $W_t$ are lower bounded by $\underline{\sigma}$.
This can be guaranteed by imposing a \textit{quantitative persistency of excitation} condition on the input sequence, as formalized in~\cite{coulson2022quantitative,berberich2023quantitative}.
\hfill\QED 
\end{remark}

\begin{remark} \label{rem:window_length}
    Both the noise bound $\delta_t$ in Lemma~\ref{lemma:LTV-Hankel-noisy} and the lower bound $\underline{\sigma}$ on the signal in Assumption~\ref{ass:PE} grow as the window length $\windowlength$ increases.
    On the one hand, samples from ``older" subspaces are included in the data matrix $W_t$ for larger $\windowlength$, possibly leading to larger noise contribution.
    On the other hand, including more data also increases the signal in $W_t$.
    In fact, the lower bound $\sigma_d(P_{\subs{t}}W_t) \geq \underline{\sigma}>0$ in Assumption~\ref{ass:PE} can only be satisfied if $\windowlength\geq d$ holds.
    From a practical perspective, the choice of $\windowlength$ provides a tuning parameter to control the rate of adaptation of the tracking algorithm introduced in Section~\ref{sec:tracking}.
    A shorter data window leads to faster adaptation, yet choosing $\windowlength$ too small increases the sensitivity to disturbances.
    One can tune the value of $\windowlength$, e.g., by validation, which is demonstrated in the numerical example in Section~\ref{sec:ex2}.
\hfill \QED
\end{remark}

\subsection{Gradient descent on the Grassmannian} \label{sec:GGD}
The main idea of our approach is to update a subspace estimate each time a new sample is available by performing $K$ iterations of gradient descent on the Grassmann manifold.
Due to the simplicity of the gradient descent update rule, we are able to provide strong theoretical guarantees for the resulting algorithm.
We denote the estimate of the true subspace $\subs{t}$ at iteration $k$ by $\estimGGD{t}{k}$.
Since estimates are points on the Grassmann manifold, the optimization is carried out intrinsically on the manifold as described below.

As discussed in the previous section, the data matrix $W_t$ serves as a surrogate representation to the true (unknown) subspace $\subs{t}$.
Therefore, we use the projection error of $W_t$ onto $\estimGGD{t}{k}$ as the cost for the gradient descent:
\begin{align} \label{eq:GGD_cost}
    F_{W_t}(\estimGGD{t}{k}) := \sum_{i=0}^{\windowlength-1} \|P_{\estimGGD{t}{k}}^\perp \traj{t-L-i+1}{t-i} \|_2^2= \|P_{\estimGGD{t}{k}}^\perp W_t\|_F^2.
\end{align}
It is worth noting that if $W_t = \subsST{t}$, the cost function reduces to the squared chordal distance, i.e., $F_{\subsST{t}}(\estimGGD{t}{k}) = d_{2}(\estimGGD{t}{k},\subs{t})^2$ holds.
Using~\eqref{eq:riemannian_grad}, the Riemannian gradient of the cost is
\begin{align} \label{eq:gradient}
\begin{split}
    \mathrm{grad}~F_{W_t}(\estimGGD{t}{k}) & = P_{\estimGGD{t}{k}}^\perp \nabla F_{W_t}(\estimGGD{t}{k}) \\
    & = P_{\estimGGD{t}{k}}^\perp \nabla ~\mathrm{tr}\left( W_t^\top P_{\estimGGD{t}{k}}^\perp W_t \right) \\
    &= -2P_{\estimGGD{t}{k}}^\perp W_t W_t^\top \estimGGDST{t}{k}.
\end{split}
\end{align}
We use the exponential map to take a step of size $\alpha>0$ towards the negative gradient direction, yielding the update for $\estimGGD{t}{k+1}$ and the formula to compute a representation $\estimGGDST{t}{k+1}$
\begin{align} \label{eq:GGD_update_rule}
\begin{split}
    &\estimGGD{t}{k+1}  =  \mathrm{Exp}_{\estimGGD{t}{k}}(-\alpha~\mathrm{grad}~F_{W_t}(\estimGGD{t}{k})), \\
    &\estimGGDST{t}{k+1} = [\estimGGDST{t}{k} Q_2~Q_1] 
    \begin{bmatrix}
    \mathrm{diag}(\cos(-\alpha \sigma_1),\dots,\cos(-\alpha \sigma_d)) \\ \mathrm{diag}(\sin(-\alpha \sigma_1),\dots,\sin(-\alpha \sigma_d))
    \end{bmatrix}
    Q_2^\top,
\end{split}
\end{align}
where $\mathrm{grad}~F_{W_t}(\estimGGD{t}{k}) = Q_1 \mathrm{diag}(\sigma_1,\sigma_2,\dots,\sigma_d)Q_2^\top$ is the compact \gls{SVD}.
We use a constant step size for the update to streamline the analysis. 
However, the performance of the algorithm can be improved by optimizing the step size online (e.g., via line search), as illustrated in the numerical example in Section~\ref{sec:ex2}.

\begin{remark} \label{rem:GROUSE}
    Another subspace tracking algorithm, termed GROUSE~\cite{balzano2010online}, also relies on gradient descent techniques on the Grassmann manifold.
    In contrast to our formulation, the GROUSE algorithm minimizes the projection error of the most recent sample only, and thus, its cost 
    is a special case of~\eqref{eq:GGD_cost} with $\windowlength = 1$.    
    As a result, the gradient descent step boils down to a rank-one update of the subspace estimate yielding lower computational complexity than that of our algorithm.
    In~\cite{zhang2016global}, the authors analyze the GROUSE algorithm under the crucial assumption that the nominal samples $\ntraj{}{t}$ are i.i.d. random variables with support on $\subs{t}$.
    This condition allows them to prove convergence of the estimates in expectation.
    However, the assumption does not hold if the subspace represents a dynamical system (c.f., Sec.~\ref{sec:mot_ex}), where samples are correlated.  
    In our setup, on the other hand, a moving data window consisting of $\windowlength \geq d$ samples is used in the cost. 
    This enables us to impose Assumption~\ref{ass:PE}, which can be satisfied by subspaces representing dynamical systems.
    Under this assumption, we can guarantee monotonic improvement of the subspace estimates in Lemma~\ref{lemma:one-step-improvement-noisy}.
    
    In~\cite{zimmermann2018geometric}, the GROUSE algorithm was used to perform model reduction for nonlinear dynamical systems online.
    The updated subspace estimate is guaranteed to contain the latest sample from the true subspace.
    In contrast, by generalizing the GROUSE algorithm, we are able to provide an upper bound on the distance between the true and estimated subspaces in Lemma~\ref{lemma:one-step-improvement-noisy}, which is a much stronger result.
\hfill\QED 
\end{remark}

\begin{remark} \label{rem:PAST}
Various subspace tracking algorithms were proposed in~\cite{yang1995PAST} to minimize a cost function similar to~(5).
Contrary to our approach, these methods optimize over non-unique matrix representations of the subspace, which are not required to be orthonormal.
For example, the Projection Approximation Subspace Tracking (PAST) algorithm minimizes an approximation of the cost leading to an unconstrained problem that is quadratic in the decision variable.
Another approach proposed in~\cite{yang1995PAST} is to apply gradient descent on the Euclidean space $\mathbb{R}^{n\times d}$ to minimize the original cost.
An analysis of the PAST algorithm was provided in~\cite{yang1996asymptotic}, yet the guarantees only hold for time-invariant subspaces and are asymptotic. 
In contrast, we optimize the subspace directly by posing the problem on the Grassmann manifold, eliminating the issue of non-uniqueness of matrix representations. 
This approach allows us to derive strong, non-asymptotic guarantees for a possibly time-varying reference subspace, which is the main advantage of our formulation compared to those in~\cite{yang1995PAST,yang1996asymptotic}.
\hfill\QED 
\end{remark}

The following enabling lemma establishes guarantees on the evolution of the distance between the estimate and the true subspace under the update rule~\eqref{eq:GGD_update_rule}.
\begin{lemma}[Single step decay bound] \label{lemma:one-step-improvement-noisy}
    Suppose Assumptions~\ref{ass:meas_err}, \ref{ass:LTV-Lipschitz} and \ref{ass:PE} hold and consider some radius $r>0$. Then, for all $\estimGGD{t}{k}\in \mathbb{B}_{r}(\subs{t})$ and all $k \in [0,K-1]$, $t \in \mathbb{Z}_{\geq \windowlength}$
    \begin{align*}
        d_{2}(\estimGGD{t}{k+1},\subs{t})^2  \leq &  d_{2}(\estimGGD{t}{k},\subs{t})^2 \\
        &- \rho~ \| \mathrm{grad}~ d_{2}(\estimGGD{t}{k},\subs{t})^2\|_F^2 
        + \gamma_r(\delta_t),
    \end{align*}
        where $\rho := \alpha \underline{\sigma}^2 - 2\alpha^2\overline{\sigma}^4$ and $\gamma_r \in \mathcal{K}_\infty$ is
    \begin{align} \label{eq:gamma_def}
    \begin{split}
    \gamma_r(\delta_t) := & 8\alpha r\overline{\sigma}(1+4\alpha\overline{\sigma}^2)\delta_t +
    \left( 4\alpha r + 16 \alpha^2 \overline{\sigma}^2(r+2) \right) \delta_t^2 \\
    & + 32\alpha^2\overline{\sigma}\delta_t^3 + 8\alpha^2 \delta_t^4,
    \end{split}
\end{align}
with $\delta_t\geq0$ defined in~\eqref{eq:delta_def_true}.
\end{lemma}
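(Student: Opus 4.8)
The plan is to relate the gradient descent step taken on the surrogate cost $F_{W_t}$ to the decrease of the true objective $d_2(\cdot,\subs{t})^2 = F_{\subsST{t}}(\cdot)$, treating the discrepancy between $W_t$ and $\subsST{t}$ as a perturbation controlled by $\delta_t$ via Lemma~\ref{lemma:LTV-Hankel-noisy}. First I would set $g := \operatorname{grad} F_{W_t}(\estimGGD{t}{k})$ and recall from \eqref{eq:gradient} that $g = -2 P_{\estimGGD{t}{k}}^\perp W_tW_t^\top \estimGGDST{t}{k}$, so $\|g\|_F \le 2\overline{\sigma}^2$ (using Assumption~\ref{ass:PE} and the fact that $P_{\estimGGD{t}{k}}^\perp$, $\estimGGDST{t}{k}$ are contractions). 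The exponential map along $-\alpha g$ moves the iterate by a geodesic of length $\alpha\|g\|_F \le 2\alpha\overline{\sigma}^2$. I would then invoke a standard Riemannian descent inequality for the exponential map with a cost whose Euclidean Hessian is bounded on the Stiefel manifold (the cost is quadratic in the representation, so this is elementary): along the geodesic one gets $F_{W_t}(\estimGGD{t}{k+1}) \le F_{W_t}(\estimGGD{t}{k}) - \alpha\|g\|_F^2 + c_0\alpha^2\|g\|_F^2 \cdot \sup\|\nabla^2 \bar F_{W_t}\|$, and bounding the Hessian norm by $2\overline{\sigma}^2$ (it equals $2W_tW_t^\top$ up to the projection bookkeeping) yields exactly a term of the form $-(\alpha\underline\sigma^2 - 2\alpha^2\overline\sigma^4)\|g\|_F^2$ once one also notes $\|g\|_F \ge$ something like $2\underline\sigma^2 \cdot(\text{sine of the angle})$; the coefficient $\rho = \alpha\underline\sigma^2 - 2\alpha^2\overline\sigma^4$ in the statement is the signature of this computation.

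The second and more delicate step is the perturbation transfer: I must replace $F_{W_t}$ by $F_{\subsST{t}}$ everywhere. Write $W_t = P_{\subs{t}}W_t + P_{\subs{t}}^\perp W_t$ and treat $E := P_{\subs{t}}^\perp W_t$ as the error term, with $\|E\|_F \le \delta_t$ by Lemma~\ref{lemma:LTV-Hankel-noisy}. The differences $F_{W_t}(\cdot) - F_{P_{\subs{t}}W_t}(\cdot)$, $\operatorname{grad} F_{W_t}(\cdot) - \operatorname{grad} F_{P_{\subs{t}}W_t}(\cdot)$, and the gap between $F_{P_{\subs{t}}W_t}$ and the genuine squared distance $d_2(\cdot,\subs{t})^2$ are all polynomial in $\|E\|_F$ (hence in $\delta_t$) with coefficients depending on $\overline\sigma$ and on $r$ (through $d_2(\estimGGD{t}{k},\subs{t}) \le r$ on the ball). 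Expanding $\|P_{\estimGGD{t}{k}}^\perp W_t\|_F^2 = \|P_{\estimGGD{t}{k}}^\perp P_{\subs{t}}W_t\|_F^2 + 2\langle P_{\estimGGD{t}{k}}^\perp P_{\subs{t}}W_t, P_{\estimGGD{t}{k}}^\perp E\rangle_F + \|P_{\estimGGD{t}{k}}^\perp E\|_F^2$ and similarly for the gradient, then using Lemma~\ref{lemma:gap_metric} to convert $\|P_{\estimGGD{t}{k}}^\perp(\cdot)\|$ factors into $d_\infty \le d_2 \le r$ factors, produces terms of orders $\delta_t$, $\delta_t^2$, $\delta_t^3$, $\delta_t^4$ with precisely the $\alpha$, $\overline\sigma$, $r$ dependence collected in \eqref{eq:gamma_def}. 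That $\gamma_r \in \mathcal{K}_\infty$ is then immediate since it is a polynomial in $\delta_t$ with nonnegative coefficients, no constant term, positive linear coefficient, and unbounded growth.

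The main obstacle is bookkeeping the constants in the perturbation expansion so that they match \eqref{eq:gamma_def} exactly: one has to be careful that the geodesic step on the surrogate manifold cost does not exactly decrease the true cost, so there is a second layer of error coming from evaluating $F_{\subsST{t}}$ at $\estimGGD{t}{k+1}$ rather than $F_{W_t}$, and both the ``start'' and ``end'' mismatches must be absorbed into a single $\gamma_r(\delta_t)$. I would organize this by first proving the clean descent inequality for the noiseless cost $F_{P_{\subs{t}}W_t}$ (where Assumption~\ref{ass:PE} gives the curvature/strong-convexity-type bounds directly), then adding and subtracting to pass to $F_{W_t}$ on the right side and to $d_2(\cdot,\subs{t})^2$ on the left, each substitution contributing an explicit polynomial-in-$\delta_t$ slack. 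The only genuinely nontrivial analytic input beyond algebra is the Riemannian second-order Taylor bound for the exponential map applied to a cost with bounded Euclidean Hessian, which is standard (e.g.\ \cite{boumal2023introduction}) and exploits that $F_{W_t}$ is quadratic in the matrix representation, so no pullback curvature terms of $F$ itself appear beyond the manifold's own, which are uniformly bounded on the compact Grassmannian.
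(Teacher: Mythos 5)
Your proposal has a genuine structural gap in the step you yourself flag as ``the more delicate'' one. You propose to run the Riemannian descent inequality on the surrogate cost $F_{W_t}$ (bounding its own Hessian), and then to transfer the resulting decrease to $d_2(\cdot,\subs{t})^2$ by arguing that the gap between $F_{P_{\subs{t}}W_t}$ and the true squared distance is ``polynomial in $\|E\|_F$ (hence in $\delta_t$).'' That last claim is false: even with $\delta_t=0$ one has $F_{P_{\subs{t}}W_t}(\mathbf{\Gamma})=\mathrm{tr}\bigl(P_{\mathbf{\Gamma}}^\perp\,P_{\subs{t}}W_tW_t^\top P_{\subs{t}}\bigr)$ versus $d_2(\mathbf{\Gamma},\subs{t})^2=\mathrm{tr}\bigl(P_{\mathbf{\Gamma}}^\perp P_{\subs{t}}\bigr)$, so the two objectives differ by a multiplicative weighting by the signal covariance, controlled only up to the sandwich $\underline{\sigma}^2\,d_2^2\le F_{P_{\subs{t}}W_t}\le\overline{\sigma}^2\,d_2^2$. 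Converting a decrease of $F$ into a decrease of $d_2^2$ through this sandwich puts a factor $\overline{\sigma}^2/\underline{\sigma}^2>1$ in front of $d_2(\estimGGD{t}{k},\subs{t})^2$, which cannot be absorbed into an additive $\gamma_r(\delta_t)$ and destroys the stated form of the inequality (coefficient exactly $1$ on the previous distance). Your heuristic ``$\|g\|_F\gtrsim 2\underline{\sigma}^2\cdot(\text{sine of the angle})$'' points at the right quantity but is not how $\underline{\sigma}^2$ can legitimately enter.

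The paper avoids this entirely by applying the Lipschitz-gradient descent inequality to the \emph{true} objective $d_2(\cdot,\subs{t})^2$ along the step direction $-\alpha\,\mathrm{grad}\,F_{W_t}$: it first shows (via a Riemannian Hessian computation) that $\mathrm{grad}\,d_2(\cdot,\subs{t})^2$ is $4$-Lipschitz, so
\begin{align*}
d_2(\estimGGD{t}{k+1},\subs{t})^2-d_2(\estimGGD{t}{k},\subs{t})^2\le -\alpha\,\langle \mathrm{grad}\,d_2^2,\ \mathrm{grad}\,F_{W_t}\rangle_F+2\alpha^2\|\mathrm{grad}\,F_{W_t}\|_F^2 ,
\end{align*}
and then splits $\mathrm{grad}\,F_{W_t}$ into a signal part $\bar G_t$ and a noise part $N_t$. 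The constant $\underline{\sigma}^2$ enters as a lower bound on the \emph{alignment} $\langle\mathrm{grad}\,d_2^2,\bar G_t\rangle_F\ge\underline{\sigma}^2\|\mathrm{grad}\,d_2^2\|_F^2$ (and $\|\bar G_t\|_F^2\le\overline{\sigma}^4\|\mathrm{grad}\,d_2^2\|_F^2$), which is where $\rho=\alpha\underline{\sigma}^2-2\alpha^2\overline{\sigma}^4$ comes from; all $\delta_t$-dependence is confined to the cross terms with $N_t$, bounded via $\|N_t\|_F\le 2\delta_t^2+4\overline{\sigma}\delta_t$ and $\|\mathrm{grad}\,d_2^2\|_F\le 2r$. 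To repair your argument you would need to reorganize it around this asymmetry: descent lemma for the true distance, surrogate only in the step direction.
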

Lemma~\ref{lemma:one-step-improvement-noisy} provides a bound on the improvement of the estimate during a single gradient descent step.
The improvement in the chordal distance between the estimate and the true subspace is proportional to the squared gradient norm times the constant $\rho$, which is related to the signal component of $W_t$.
On the other hand, the noise component of $W_t$ deteriorates the guaranteed decay bound. 
This effect is characterized by the bias term $\gamma_r(\delta_t)$, where $\gamma_r$ is a fourth-order polynomial in the noise bound $\delta_t$ from Lemma~\ref{lemma:LTV-Hankel-noisy}. 
Since all coefficients are non-negative, and the constant coefficient is zero, $\gamma_r$ is a $\mathcal{K}_\infty$ function.
Therefore, for the noise-free case with $\delta_t\equiv0$ (c.f., Remark~\ref{rem:Fundamental_lemma}), Lemma~\ref{lemma:one-step-improvement-noisy} guarantees monotonic improvement of the estimates as long as the gradient of $d_2(\estimGGD{t}{k},\subs{t})$ is non-zero, and the step size is sufficiently small.
More specifically, if the step size $\alpha$ is in the interval $(0,\underline{\sigma}^2/(2\overline{\sigma}^4))$, $\rho$ is guaranteed to be positive; see Figure~\ref{fig:rho} and the later Remark~\ref{rem:step-size} for a more detailed discussion.
Note that the bound in Lemma~\ref{lemma:one-step-improvement-noisy} is tighter for smaller radius $r$, i.e., when the estimate $\estimGGD{t}{k}$ is closer to $\subs{t}$.

\begin{figure}
    \centering
    \begin{tikzpicture}
\def\so{2}
\def\su{1}
\def\marg{0.005}
\def\margy{0.005}
    \begin{axis}[
        axis equal,
        height=5cm,
        width=\linewidth,
        ytick={\su^4/(8*\so^4)},
        yticklabels={$\dfrac{\underline{\sigma}^4}{8\overline{\sigma}^4}$},
        scaled y ticks = false,
        xtick={0,\su^2/(4*\so^4),\su^2/(2*\so^4)},
        xticklabels={$0$,$\dfrac{\underline{\sigma}^2}{4\overline{\sigma}^4}$,$\dfrac{\underline{\sigma}^2}{2\overline{\sigma}^4}$},
        scaled x ticks = false,
        axis lines = center,
        xlabel = $\alpha$,
        ylabel = $\rho$,
        xmin = -\marg, xmax = \su^2/(2*\so^4)+\marg,
        ymin = -\margy, ymax = \su^4/(8*\so^4)+\margy,
        samples = 100,
        domain = -\marg:\su^2/(2*\so^4)+\marg,
    ]
        \addplot[
            smooth,
            thick,
            myBlue
        ]{x*(\su^2 - x*2*\so^4)};
        \addplot[dashed,samples=2,domain=0:\su^2/(4*\so^4)]{\su^4/(8*\so^4)};

        \addplot[dashed] coordinates {
        (\su^2/\so^4/4, 0)
        (\su^2/4/\so^4, \su^4/8/\so^4)
        };
    \end{axis}
\end{tikzpicture}
    \caption{Illustration of the convergence rate $\rho$ as a function of the step size $\alpha$ in Lemma~\ref{lemma:one-step-improvement-noisy}. The convergence rate is positive for any $\alpha$ from the interval $\left(0,\underline{\sigma}^2/(2\overline{\sigma}^2)\right)$. The maximal value of $\rho$ is $\underline{\sigma}^4/(8\overline{\sigma}^2)$, which is achieved with $\alpha=\underline{\sigma}^2/(4\overline{\sigma}^2)$.}
    \label{fig:rho}
\end{figure}
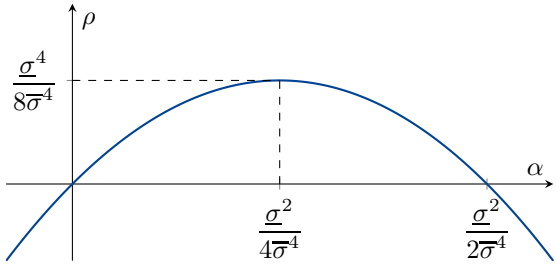

\subsection{Tracking of time-varying subspaces} \label{sec:tracking}
The proposed approach for tracking $\subs{t}$ called \acrfull{GREAT} is described by the pseudo-code in Algorithm~\ref{alg:GGD-exp}.
\begin{algorithm}[H] 
    \caption{\gls{GREAT}}
    \label{alg:GGD-exp}
    \begin{algorithmic}
        \STATE \textbf{Input:} initial estimate $\estim{t_0} \in \Grass{n}{d}$ for some initial time $t_0\geq d$, sequence of samples $\{u_t\}_{t\in\mathbb{N}}$, window length $\windowlength \in \{d,\dots,t_0+1\}$, step size $\alpha\in\left(0,\dfrac{\underline{\sigma}^2}{2\overline{\sigma}^4} \right)$, gradient descent iteration number $K>0$
        \STATE \textbf{for} $t=t_0+1,t_0+2,\dots$
        \STATE \hspace{0.2cm} Construct matrix $W_t = [\traj{}{t-\windowlength+1}~\traj{}{t-\windowlength+2}~\dots~\traj{}{t}]$
        \STATE \hspace{0.2cm} Initialize gradient descent $\estimGGD{t}{0} = \estim{t-1}$
        \STATE \hspace{0.2cm} \textbf{for} $k=0,\dots,K-1$
        \STATE \hspace{0.4cm} Grad. descent step $\estimGGD{t}{k+1} = \mathrm{Exp}_{\estimGGD{t}{k}}(-\alpha~ \mathrm{grad}\, F_{W_t}(\estimGGD{t}{k}))$
        \vspace{-4mm}
        \STATE \hspace{0.2cm} \textbf{end}
        \STATE \hspace{0.2cm} Update estimate $\estim{t} = \estimGGD{t}{K}$
        \STATE \textbf{end}
        \STATE \textbf{Output:} sequence of estimates $\{\estim{t}\}_{t > t_0}$
    \end{algorithmic}
\end{algorithm}
The \gls{GREAT} algorithm recursively updates the subspace estimate $\estim{t}$ by performing $K$ steps of gradient descent update~\eqref{eq:GGD_update_rule} between sampling times $t$ and $t+1$.  
The initial estimate $\estim{t_0}$ can be constructed as the span of $d$ singular vectors of an initializing data matrix $W_\mathrm{ini} = [\traj{}{1}~\traj{}{2}~\dots~\traj{}{t_0}]$, corresponding to its largest singular values.
With this choice, $\estim{t_0}$ becomes the minimizer of the cost $F(\estim{}) = \|P^\perp_{\estim{}} W_\mathrm{ini}\|_F^2$ (c.f., Eq.~\eqref{eq:GGD_cost}).
During online operation, the data matrix $W_t$ is constructed from the most recent samples, and therefore, the subspace estimates adapt to the online data.

\begin{remark} \label{rem:computation}
    The computationally most expensive part of Algorithm~\ref{alg:GGD-exp} is calculating the gradient of the cost in~\eqref{eq:gradient} at each gradient descent iteration.
    For an efficient implementation, one can exploit that the data matrix only appears as an inner product $W_t W_t^\top$ (the empirical data covariance) in the expression.
    This inner product at time $t$ can be formulated as a rank-2 update , i.e., $W_tW_t^\top = W_{t-1}W_{t-1}^\top - \traj{}{t-\windowlength}\traj{}{t-\windowlength}^\top + \traj{}{t}\traj{}{t}^\top$.
    Therefore, the complexity of the algorithm does not depend on $\windowlength$. 
    With this implementation, the complexity of computing the gradient is $\mathcal{O}(n^2d)$, leading to overall complexity $\mathcal{O}(K n^2d)$ per sample.
    The other expensive part of the algorithm is evaluating the exponential map in equation~\eqref{eq:GGD_update_rule}, which requires computing the compact \gls{SVD} of the gradient.
    However, that step is of complexity $\mathcal{O}(n d^2)$ only, and therefore it is not the bottleneck.
    Similarly to Algorithm~\ref{alg:GGD-exp}, various data-driven control formulations depend only on the empirical data covariance, and this fact was exploited for computational reasons, e.g., in~\cite{zhao2024data}.

    Note that there exist computationally more efficient algorithms for subspace tracking in the literature. 
    For example, the GROUSE~\cite{balzano2010online} and \gls{PAST}~\cite{yang1995PAST} algorithms have $\mathcal{O}(nd^2)$ and $\mathcal{O}(nd)$ complexity per iteration, respectively. 
    However, we are able to provide non-asymptotic guarantees for the \gls{GREAT} algorithm, even when the subspace represents a dynamical system (see Remarks \ref{rem:GROUSE} and \ref{rem:PAST}).
\hfill\QED 
\end{remark}

Now we turn to the theoretical analysis of Algorithm~\ref{alg:GGD-exp}.
We provide convergence guarantees for the \gls{GREAT} algorithm by utilizing the bound in Lemma~\ref{lemma:one-step-improvement-noisy} that quantifies the improvement of the inner loop with the gradient descent update.
In addition, we exploit the fact that the chordal distance is gradient dominant if $d_2(\estimGGD{}{k},\subs{t})\leq r_\mathrm{b}$ holds for some $r_\mathrm{b}<1$.
To show that this condition always holds, we first require that the initial estimate is in the ball $\mathbb{B}_{r_\mathrm{b}}$ centered around the true subspace.
Further, we assume that the ratio between the signal and noise in $W_t$ is such that the gradient descent steps in the inner loop keep the estimates close to the changing true subspace (see Figure~\ref{fig:invar}).
These conditions are formalized in the following assumption.
\begin{figure}
    \centering
    \begin{tikzpicture}
    \begin{axis}[
        axis equal,
        axis lines=none,
        xmin=-4, xmax=4,
        ymin=-4, ymax=4,
    ]
        \addplot[
            domain=0:360,
            samples=200,
            smooth,
            thick,
            myGreen
        ] ({2*cos(x)}, {2*sin(x)});
        
        \addplot[
            domain=0:360,
            samples=200,
            smooth,
            thick,
            myRed
        ] ({cos(x)}, {sin(x)});
        
        \addplot[
            domain=0:360,
            samples=200,
            smooth,
            thick,
            myBlue
        ] ({3*cos(x) + cos(30)}, {3*sin(x) + sin(30)});

        \addplot[
            domain=0:360,
            samples=200,
            smooth,
            dashed,
            thick,
            myGreen
        ] ({2*cos(x) + cos(30)}, {2*sin(x) + sin(30)});

        \node[fill=black, circle, inner sep=1pt] at (axis cs:0,0) {};
        \node[black,anchor=north west] at (axis cs:0,0) {$\subs{t-1}$};

        \node[fill=black, circle, inner sep=1pt] at (axis cs:0.866,0.5) {};
        \node[black,anchor=south west] at (axis cs:0.866,0.5) {$\subs{t}$};

        \node[fill=black, circle, inner sep=1pt] at (axis cs:-2*0.866,-2*0.5) {};
        \node[black,anchor=north east] at (axis cs:-2*0.866,-2*0.5) {$\estim{t-1}$};

        \node[fill=black, circle, inner sep=1pt] at (axis cs:-2*0.5+0.866,-2*0.866+0.5) {};
        \node[black,anchor=north] at (axis cs:-2*0.5+0.866,-2*0.866+0.5) {$\estim{t}$};

        \draw[black, dashed,-Latex] (0,0) -- (0.866,0.5);
        \node[black,anchor=south east] at (axis cs:0.5*0.866,0.25) {$c$};

        \draw[black,dashed,-Latex] (-2*0.866,-2*0.5) -- (-2*0.5+0.866,-2*0.866+0.5);

    \end{axis}
\end{tikzpicture}
    \caption{Illustration of Assumption~\ref{ass:sufficient_decrease} and 
    Lemma~\ref{lemma:grad_dom}. 
    If the estimate $\estim{t-1}$ is in the metric ball $\mathbb{B}_{r_\mathrm{b}-c}(\subs{t-1})$ (solid yellow circle), it is also in $\mathbb{B}_{r_\mathrm{b}}(\subs{t})$ (blue circle), since $\subs{t} \in \mathbb{B}_c(\subs{t-1})$ (red circle) holds by Assumption~\ref{ass:LTV-Lipschitz}.
    Furthermore, under Assumption~\ref{ass:sufficient_decrease}, the gradient descent updates in the inner loop of the \gls{GREAT} algorithm guarantee that the distance between the estimate and $\subs{t}$ reduces from $r_\mathrm{b}$ to $r_\mathrm{b}-c$, i.e., $\estim{t}\in\mathbb{B}_{r_\mathrm{b}-c}(\subs{t})$ (dashed yellow circle).
    Therefore, the estimates are always contained in the ball $\mathbb{B}_{r_\mathrm{b}}$ centered around the current true subspace.}
    \label{fig:invar}
\end{figure}
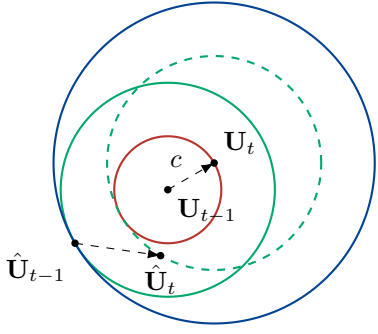
\begin{assum} \label{ass:sufficient_decrease}
The initial estimate satisfies $\estim{t_0} \in \mathbb{B}_{r_\mathrm{b}}(\subs{t_0 + 1})$ for some $c \leq r_\mathrm{b} < 1$, and there exist a step size $\alpha$, a window length $T$, and an iteration number $K$, such that, for all $t\in \mathbb{Z}_{\geq t_0}$,
    \begin{align} \label{eq:SNR_ass}
        \gamma_{r_\mathrm{b}}(\delta_t) \leq (1-\tilde{\rho})r_\mathrm{b}^2 + \frac{(1-\tilde{\rho})(c^2-2cr_\mathrm{b})}{1-\tilde{\rho}^K},
    \end{align}
    with 
    \begin{align} \label{eq:rho_tilde}
        \tilde{\rho} := 1-4(1-r_\mathrm{b}^2)(\alpha \underline{\sigma}^2 - 2\alpha^2\overline{\sigma}^4) \geq0.
    \end{align}
\end{assum}
The assumption provides a condition that depends both on the design parameters $\alpha,T,K$, and the constants $\epsilon,c,\underline{\sigma},\overline{\sigma}$ from Assumptions~\ref{ass:meas_err}-\ref{ass:PE} quantifying the properties of the data.
For fixed design parameters, inequality~\eqref{eq:SNR_ass} can be interpreted as a lower bound on the signal-to-noise ratio of the data matrix $W_t$.
The right-hand side grows with $\underline{\sigma}$, which characterizes the signal part of $W_t$.
By contrast, the left-hand side is related to the noise bound $\delta_t$ through the $\mathcal{K}_\infty$ function $\gamma_{r_\mathrm{b}}$.
\begin{remark}
    In case only one gradient descent iteration is performed in the inner loop of Algorithm~\ref{alg:GGD-exp}, i.e., $K=1$ holds, the condition~\eqref{eq:SNR_ass} can be expressed as
    \begin{align*}
        \underline{\sigma}^2 \geq \frac{\gamma_{r_\mathrm{b}}(\delta_t) + c(2r_\mathrm{b} -c)}{4\alpha r_\mathrm{b}^2(1-r_\mathrm{b}^2)} + 2\alpha\overline{\sigma}^4.
    \end{align*}
    This condition is an explicit lower bound on the signal component of $W_t$.
    Note that the bound increases if $c$ grows, as $c\leq r_\mathrm{b}$ must hold.
\hfill \QED
\end{remark}
With these assumptions in place, we conclude that the estimates always lie in the ball $\mathbb{B}_{r_\mathrm{b}}$ around the true sequence of subspaces, see Figure~\ref{fig:invar} for an illustration.
In this set, the chordal distance satisfies a gradient dominance property, as formalized below.
\begin{lemma}[Gradient dominance] \label{lemma:grad_dom}
    Let Assumptions~\ref{ass:meas_err}, \ref{ass:LTV-Lipschitz} and \ref{ass:PE} hold, and let us choose $\alpha,T,K$ such that Assumption \ref{ass:sufficient_decrease} holds.
    Then, the estimates in Algorithm~\ref{alg:GGD-exp} satisfy, for all $t\in\mathbb{Z}_{\geq t_0}$ and all $k=0,\dots,K$,
    \begin{enumerate}
        \item $\estimGGD{t}{k} \in \mathbb{B}_{r_\mathrm{b}}(\subs{t})$,
        \item $\|\mathrm{grad}~ d_{2}(\estimGGD{t}{k},\subs{t})^2\|_F^2 \geq 4(1-r_\mathrm{b}^2)d_{2}(\estimGGD{t}{k},\subs{t})^2.$
    \end{enumerate}
\end{lemma}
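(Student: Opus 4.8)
The plan is to establish both claims together by induction on $t$, using the single-step bound of Lemma~\ref{lemma:one-step-improvement-noisy} as the engine, the gradient dominance of claim~2 to turn its gradient-norm term into the distance itself, Assumption~\ref{ass:LTV-Lipschitz} to account for the drift between consecutive true subspaces, and the tailored inequality~\eqref{eq:SNR_ass} to close the loop. Throughout, the admissible step size makes $\rho=\alpha\underline{\sigma}^2-2\alpha^2\overline{\sigma}^4>0$, hence $\tilde\rho$ in~\eqref{eq:rho_tilde} lies in $[0,1)$ and acts as a per-iteration contraction factor.

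\textbf{Pointwise gradient dominance.} First I would prove that for \emph{any} $d$-dimensional subspaces $\mathbf{X},\mathbf{U}$ with principal angles $\theta_1\le\cdots\le\theta_d$,
\[
\|\mathrm{grad}\, d_{2}(\mathbf{X},\mathbf{U})^2\|_F^2 = 4\sum_{i=1}^d \cos^2\theta_i\sin^2\theta_i .
\]
Indeed, taking the data matrix in the cost~\eqref{eq:GGD_cost} to be an orthonormal basis $U$ of $\mathbf{U}$ turns the cost into $\|P_{\mathbf{X}}^\perp U\|_F^2 = \operatorname{tr}(P_{\mathbf{X}}^\perp P_{\mathbf{U}}) = d_{2}(\mathbf{X},\mathbf{U})^2$, and by~\eqref{eq:gradient} its Riemannian gradient is $-2P_{\mathbf{X}}^\perp P_{\mathbf{U}}X$; evaluating the Frobenius norm in a basis of principal vectors — so that $X^\top U=\operatorname{diag}(\cos\theta_1,\dots,\cos\theta_d)$ — collapses the trace to $\sum_i(\cos^2\theta_i-\cos^4\theta_i)$. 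Since $\cos^2\theta_i = 1-\sin^2\theta_i\ge 1-\sum_j\sin^2\theta_j = 1-d_{2}(\mathbf{X},\mathbf{U})^2$, this is at least $4\bigl(1-d_{2}(\mathbf{X},\mathbf{U})^2\bigr)d_{2}(\mathbf{X},\mathbf{U})^2$; in particular, once $\estimGGD{t}{k}\in\mathbb{B}_{r_\mathrm{b}}(\subs{t})$ is known this is exactly claim~2, so claim~2 is a byproduct of claim~1.

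\textbf{Inner recursion and induction on $t$.} The outer induction hypothesis at time $t$ is $\estimGGD{t}{0}=\estim{t-1}\in\mathbb{B}_{r_\mathrm{b}}(\subs{t})$; the base case is the initial condition in Assumption~\ref{ass:sufficient_decrease}. Granting it, I run an inner induction on $k$: whenever $\estimGGD{t}{k}\in\mathbb{B}_{r_\mathrm{b}}(\subs{t})$, Lemma~\ref{lemma:one-step-improvement-noisy} applies with $r=r_\mathrm{b}$, and substituting the pointwise gradient dominance (valid at $\estimGGD{t}{k}$) together with $\rho>0$ turns it into the affine recursion
\[
d_{2}(\estimGGD{t}{k+1},\subs{t})^2 \;\le\; \tilde\rho\, d_{2}(\estimGGD{t}{k},\subs{t})^2 + \gamma_{r_\mathrm{b}}(\delta_t).
\]
Since $c^2-2cr_\mathrm{b}=c(c-2r_\mathrm{b})\le0$ (because $0\le c\le r_\mathrm{b}$), Assumption~\ref{ass:sufficient_decrease} in particular forces $\gamma_{r_\mathrm{b}}(\delta_t)\le(1-\tilde\rho)r_\mathrm{b}^2$, so $d_{2}(\estimGGD{t}{k},\subs{t})^2\le r_\mathrm{b}^2$ is preserved along the inner loop for all $k=0,\dots,K$ — this is claim~1 at time $t$, and claim~2 follows. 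Unrolling the recursion to $k=K$ gives $d_{2}(\estim{t},\subs{t})^2\le\tilde\rho^K r_\mathrm{b}^2+\gamma_{r_\mathrm{b}}(\delta_t)\tfrac{1-\tilde\rho^K}{1-\tilde\rho}$, into which the \emph{full} right-hand side of~\eqref{eq:SNR_ass} substitutes so that everything telescopes to $r_\mathrm{b}^2+c^2-2cr_\mathrm{b}=(r_\mathrm{b}-c)^2$, i.e., $\estim{t}\in\mathbb{B}_{r_\mathrm{b}-c}(\subs{t})$. The triangle inequality for $d_{2}$ together with $d_{2}(\subs{t},\subs{t+1})\le c$ (Assumption~\ref{ass:LTV-Lipschitz}) then yields $\estim{t}=\estimGGD{t+1}{0}\in\mathbb{B}_{r_\mathrm{b}}(\subs{t+1})$, which is the outer hypothesis at $t+1$, closing the induction.

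\textbf{Main obstacle.} The delicate part is not any single estimate but the interlocking of the two inductions: Lemma~\ref{lemma:one-step-improvement-noisy} may be invoked at inner step $k$ only after $\estimGGD{t}{k}\in\mathbb{B}_{r_\mathrm{b}}(\subs{t})$ has been secured, so claim~1 must be propagated through the iteration rather than obtained in one shot; and the purpose of the opaque correction term $\tfrac{(1-\tilde\rho)(c^2-2cr_\mathrm{b})}{1-\tilde\rho^K}$ in~\eqref{eq:SNR_ass} is precisely to make the end-of-loop estimate contract from radius $r_\mathrm{b}$ to $r_\mathrm{b}-c$, leaving exactly enough slack to absorb the subspace drift $c$ before the next sample arrives. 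The only unavoidable explicit computation is the principal-angle identity for $\|\mathrm{grad}\, d_{2}^2\|_F^2$ above.
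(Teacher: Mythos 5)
Your proposal is correct and follows essentially the same route as the paper: establish the pointwise gradient-dominance bound via principal angles, then run the interlocked inner induction on $k$ (forward invariance of $\mathbb{B}_{r_\mathrm{b}}(\subs{t})$ using Lemma~\ref{lemma:one-step-improvement-noisy} plus the weak form $\gamma_{r_\mathrm{b}}(\delta_t)\le(1-\tilde\rho)r_\mathrm{b}^2$ of~\eqref{eq:SNR_ass}) and outer induction on $t$ (unrolling to $k=K$, contracting to $(r_\mathrm{b}-c)^2$ via the full inequality~\eqref{eq:SNR_ass}, and absorbing the drift $c$ by the triangle inequality). The only cosmetic difference is that you compute $\|\mathrm{grad}\,d_2^2\|_F^2$ exactly as $4\sum_i\cos^2\theta_i\sin^2\theta_i$, whereas the paper lower-bounds the corresponding trace by $4\min_i\cos^2\theta_i\cdot d_2^2$; both yield the same constant $4(1-r_\mathrm{b}^2)$.
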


We are now ready to analyze how the distance between the true subspace and the estimate evolves under Algorithm~\ref{alg:GGD-exp}.
\begin{thm}[Convergence and uncertainty quantification] \label{thm:main}
    Define $\delta_t,\gamma_{r_\mathrm{b}},\tilde{\rho}$ as in~\eqref{eq:delta_def_true},~\eqref{eq:gamma_def},~\eqref{eq:rho_tilde}.
    Let Assumptions~\ref{ass:meas_err}, \ref{ass:LTV-Lipschitz}, \ref{ass:PE} hold, and choose $\alpha,T,K$ such that Assumption \ref{ass:sufficient_decrease} holds.
    Then, the output $\{\estim{t}\}_{t>t_0}$ of Algorithm~\ref{alg:GGD-exp} is such that 
    \begin{align} 
    \begin{split} \label{eq:main_thm_bound}
        d_{2}(\estim{t},\subs{t})^2 \leq & \tilde{\rho}^{Kt} d_{2}(\estim{t_0},\subs{t_0})^2 + \frac{1-\tilde{\rho}^{Kt}}{1-\tilde{\rho}}\gamma_{r_\mathrm{b}}(\|\delta\|_\infty) \\
        & + \frac{1-\tilde{\rho}^{Kt}}{1-\tilde{\rho}^K}\tilde{\rho}^K(2r_\mathrm{b}-c)c,~~ \forall t\in\mathbb{Z}_{\geq t_0+1},
    \end{split}
    \end{align}
    where $\tilde{\rho} \in (0,1)$ for all $\alpha$ satisfying $0 < \alpha < \dfrac{\underline{\sigma}^2}{2\overline{\sigma}^4}$.
    Consequently, the following ultimate bound is satisfied
    \begin{align} \label{eq:main_thm_ultimate_bound}
    \begin{split}
        \limsup_{t\to\infty} d_2(\estim{t},\subs{t})^2 = &\frac{1}{1-\tilde{\rho}}\gamma_{r_\mathrm{b}}(\|\delta\|_\infty) \\ 
        & + \frac{\tilde{\rho}^K}{1-\tilde{\rho}^K}(2r_\mathrm{b}-c)c.
    \end{split}
    \end{align}
\end{thm}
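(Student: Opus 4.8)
The plan is to combine the single-step decay bound of Lemma~\ref{lemma:one-step-improvement-noisy} with the gradient dominance property of Lemma~\ref{lemma:grad_dom}, and then iterate the resulting contraction both within the inner gradient-descent loop (over $k$) and across time steps (over $t$). First I would fix a time $t \in \mathbb{Z}_{\geq t_0+1}$ and work inside the inner loop. Lemma~\ref{lemma:grad_dom} guarantees $\estimGGD{t}{k}\in\mathbb{B}_{r_\mathrm{b}}(\subs{t})$ for all $k$, so Lemma~\ref{lemma:one-step-improvement-noisy} applies with $r = r_\mathrm{b}$, giving
\begin{align*}
    d_2(\estimGGD{t}{k+1},\subs{t})^2 \leq d_2(\estimGGD{t}{k},\subs{t})^2 - \rho\,\|\mathrm{grad}\,d_2(\estimGGD{t}{k},\subs{t})^2\|_F^2 + \gamma_{r_\mathrm{b}}(\delta_t).
\end{align*}
Substituting the gradient dominance inequality $\|\mathrm{grad}\,d_2(\estimGGD{t}{k},\subs{t})^2\|_F^2 \geq 4(1-r_\mathrm{b}^2)d_2(\estimGGD{t}{k},\subs{t})^2$ and recalling $\rho = \alpha\underline{\sigma}^2 - 2\alpha^2\overline{\sigma}^4$, the bracketed term collapses into the linear contraction
\begin{align*}
    d_2(\estimGGD{t}{k+1},\subs{t})^2 \leq \tilde{\rho}\,d_2(\estimGGD{t}{k},\subs{t})^2 + \gamma_{r_\mathrm{b}}(\delta_t),
\end{align*}
with $\tilde{\rho} = 1 - 4(1-r_\mathrm{b}^2)(\alpha\underline{\sigma}^2 - 2\alpha^2\overline{\sigma}^4)$ exactly as in~\eqref{eq:rho_tilde}. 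One also needs $\tilde{\rho}\in(0,1)$: upper-boundedness by $1$ follows because $\rho>0$ whenever $0<\alpha<\underline{\sigma}^2/(2\overline{\sigma}^4)$ (see Figure~\ref{fig:rho}), and nonnegativity is part of Assumption~\ref{ass:sufficient_decrease}; I would note that strict positivity can be assumed without loss since $\tilde\rho=0$ only strengthens the bound.

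Next I would unroll this one-step recursion over $k=0,\dots,K-1$. Using $\estimGGD{t}{0} = \estim{t-1}$ and $\estimGGD{t}{K} = \estim{t}$, a standard geometric-sum argument yields
\begin{align*}
    d_2(\estim{t},\subs{t})^2 \leq \tilde{\rho}^K d_2(\estim{t-1},\subs{t})^2 + \frac{1-\tilde{\rho}^K}{1-\tilde{\rho}}\gamma_{r_\mathrm{b}}(\delta_t).
\end{align*}
The term $d_2(\estim{t-1},\subs{t})^2$ still refers to the \emph{new} subspace, so I would invoke Assumption~\ref{ass:LTV-Lipschitz} together with the triangle inequality for the chordal metric: $d_2(\estim{t-1},\subs{t}) \leq d_2(\estim{t-1},\subs{t-1}) + d_2(\subs{t-1},\subs{t}) \leq d_2(\estim{t-1},\subs{t-1}) + c$. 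Squaring gives $d_2(\estim{t-1},\subs{t})^2 \leq d_2(\estim{t-1},\subs{t-1})^2 + 2c\,d_2(\estim{t-1},\subs{t-1}) + c^2 \leq d_2(\estim{t-1},\subs{t-1})^2 + 2c r_\mathrm{b} + c^2$, where in the last step I used $d_2(\estim{t-1},\subs{t-1})\leq r_\mathrm{b}$ from Lemma~\ref{lemma:grad_dom}. Substituting, and writing $b_t := d_2(\estim{t},\subs{t})^2$, produces the across-time recursion
\begin{align*}
    b_t \leq \tilde{\rho}^K b_{t-1} + \tilde{\rho}^K(2r_\mathrm{b}-c)c + \frac{1-\tilde{\rho}^K}{1-\tilde{\rho}}\gamma_{r_\mathrm{b}}(\delta_t).
\end{align*}
(Here I would double-check the sign bookkeeping so that the constant reads $(2r_\mathrm{b}-c)c$ rather than $(2r_\mathrm{b}+c)c$; the version in~\eqref{eq:main_thm_bound} suggests a slightly tighter squaring step, e.g.\ bounding $2c\,d_2 + c^2 = c(2 d_2 + c)$ and using $d_2\le r_\mathrm b - $ something, which I would reconcile with Assumption~\ref{ass:sufficient_decrease} and the invariance argument behind Lemma~\ref{lemma:grad_dom}.)

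Finally I would unroll this linear recursion in $t$ from $t_0$ to $t$, which is again a finite geometric sum with ratio $\tilde{\rho}^K$. Bounding $\gamma_{r_\mathrm{b}}(\delta_t) \leq \gamma_{r_\mathrm{b}}(\|\delta\|_\infty)$ (valid since $\gamma_{r_\mathrm{b}}\in\mathcal{K}_\infty$ is monotone) and collecting terms gives
\begin{align*}
    b_t \leq \tilde{\rho}^{Kt} b_{t_0} + \frac{1-\tilde{\rho}^{Kt}}{1-\tilde{\rho}^K}\Bigl(\tilde{\rho}^K(2r_\mathrm{b}-c)c + \frac{1-\tilde{\rho}^K}{1-\tilde{\rho}}\gamma_{r_\mathrm{b}}(\|\delta\|_\infty)\Bigr),
\end{align*}
which rearranges to~\eqref{eq:main_thm_bound} after distributing the geometric factor; the exponent $\tilde\rho^{Kt}$ (rather than $\tilde\rho^{K(t-t_0)}$) is fine since $\tilde\rho\in(0,1)$ makes the bound only looser. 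The ultimate bound~\eqref{eq:main_thm_ultimate_bound} then follows by taking $t\to\infty$ and using $\tilde{\rho}^{Kt}\to 0$. The main obstacle I anticipate is the careful bookkeeping in the two-level unrolling — in particular making sure the invariance $\estimGGD{t}{k}\in\mathbb{B}_{r_\mathrm{b}}(\subs{t})$ genuinely holds for every $(t,k)$ so that Lemmas~\ref{lemma:one-step-improvement-noisy} and~\ref{lemma:grad_dom} may be applied at each step (this is exactly where Assumption~\ref{ass:sufficient_decrease}, and specifically the inequality~\eqref{eq:SNR_ass} with its $1-\tilde\rho^K$ denominator, earns its keep), and matching the constant $(2r_\mathrm{b}-c)c$ and the precise placement of the geometric factors to the stated form.
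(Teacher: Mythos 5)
Your proposal is correct and follows essentially the same route as the paper: collapse Lemma~\ref{lemma:one-step-improvement-noisy} and the gradient dominance of Lemma~\ref{lemma:grad_dom} into the contraction with rate $\tilde{\rho}$, unroll over $k$, absorb the subspace drift via the triangle inequality, and unroll over $t$ as a geometric sum. The one loose end you flagged resolves exactly as you guessed: inside the proof of Lemma~\ref{lemma:grad_dom} the paper establishes the tighter invariance bound $d_2(\estim{t-1},\subs{t-1})\leq r_\mathrm{b}-c$ (inequality~\eqref{eq:lemma_4_proof2}, which is where the $1-\tilde{\rho}^K$ denominator in Assumption~\ref{ass:sufficient_decrease} is used), so the squaring step gives $2c(r_\mathrm{b}-c)+c^2=(2r_\mathrm{b}-c)c$ rather than $(2r_\mathrm{b}+c)c$.
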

The proof of Theorem~\ref{thm:main} can be found in the Appendix, and follows by combining the single step improvement bound in Lemma~\ref{lemma:one-step-improvement-noisy} and the gradient dominance property in Lemma~\ref{lemma:grad_dom}.
Theorem~\ref{thm:main} guarantees that the subspace estimate remains in an invariant tube around the true sequence of subspaces as illustrated in Figure~\ref{fig:tube}.
The tube is characterized by the squared chordal distance between $\estim{t}$ and $\subs{t}$.
Inequality~\eqref{eq:main_thm_bound} quantifies how the tube size (also called bias) is affected by the measurement error $\epsilon$ on the data and time variation $c$ of the true subspace affecting the noise portion of the measurements in~\eqref{eq:delta_def}.
Less noise in the samples and/or slower variation of the true subspace leads to a smaller uncertainty region.
Furthermore, the ultimate bound in~\eqref{eq:main_thm_ultimate_bound} captures the asymptotic behavior of the tube size.

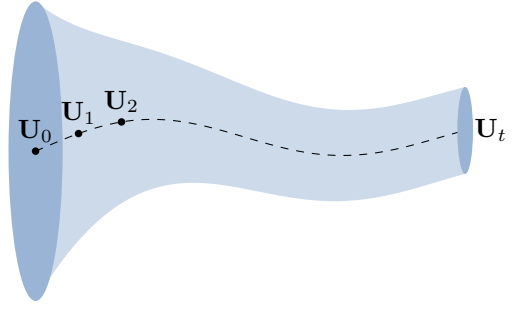
\begin{figure}
    \centering
    \begin{tikzpicture}
\def\rho{0.6}
\def\bias{0.6}
\def\dini{1.5}
    \begin{axis}[
        axis lines=middle,
        axis lines=none,
        xmin=-1, xmax=11,
        ymin=-3, ymax=3,
        domain=0:10,
        samples=100,
        smooth,
    ]

    \coordinate (Pstart) at (0,0);
    
    \draw[fill=myBlue!40, draw=none] (Pstart) ellipse [x radius=0.3*(\dini+\bias), y radius=\dini+\bias];

    \node[black,fill=black, circle, inner sep=1pt] at (Pstart) {};
    \node[black,anchor=south] at (Pstart) {$\subs{0}$};
    
    \addplot+[black,dashed,no marks] {
        0.3 * sin(deg(x/1.5)) + 
        0.2 * sin(deg(x/7)) + 
        0.1 * sin(deg(x/3))
    };

    \addplot+[black,name path=uc,no marks, draw=none] {
        0.3 * sin(deg(x/1.5)) + 
        0.2 * sin(deg(x/7)) + 
        0.1 * sin(deg(x/3)) +
        \rho^x*\dini + \bias
    };
    
    \addplot+[black,name path=lc,no marks,draw=none] {
        0.3 * sin(deg(x/1.5)) + 
        0.2 * sin(deg(x/7)) + 
        0.1 * sin(deg(x/3)) -
        \rho^x*\dini - \bias
    };

    \addplot[myBlue!20] fill between [of=uc and lc];

    \coordinate (Pend) at (10,{0.3*sin(deg(10/1.5)) + 0.2*sin(deg(10/7)) + 0.1*sin(deg(10/3))} );

    \node[black,fill=black, circle, inner sep=1pt] at (Pend) {};
    \node[black,anchor= west] at (Pend) {$\subs{t}$};
    
    \draw[fill=myBlue!40, fill opacity=0.75, draw=none] (Pend) ellipse [x radius=0.3*(\rho^10*\dini+\bias), y radius=\rho^10*\dini+\bias];

    \coordinate (Pone) at (1,{0.3*sin(deg(1/1.5)) + 0.2*sin(deg(1/7)) + 0.1*sin(deg(1/3))} );
    \node[black,fill=black, circle, inner sep=1pt] at (Pone) {};
    \node[black,anchor= south] at (Pone) {$\subs{1}$};

    \coordinate (Ptwo) at (2,{0.3*sin(deg(2/1.5)) + 0.2*sin(deg(2/7)) + 0.1*sin(deg(2/3))} );
    \node[black,fill=black, circle, inner sep=1pt] at (Ptwo) {};
    \node[black,anchor= south] at (Ptwo) {$\subs{2}$};
    
    \end{axis}
\end{tikzpicture}
    \caption{Illustration of the bound in Theorem~\ref{thm:main}. 
    The blue tube illustrates a sequence of metric balls with varying radius centered around the true sequence of subspaces (dashed line).
    The estimates from the \gls{GREAT} algorithm are guaranteed to remain in the tube.
    The evolution of the tube's radius is a function of the bound on the true subspace's rate of change $c$, the measurement error $\epsilon$, and the signal $\overline{\sigma},\underline{\sigma}$.}
    \label{fig:tube}
\end{figure}

Note that the \gls{GREAT} algorithm produces estimates with diminishing bias in the sense that $\lim_{t\to\infty} d_2(\estim{t},\subs{t}) = 0$ for $c=0$ and $\epsilon = 0$.
Furthermore, exponential convergence to the true subspace can be guaranteed in this case, which is summarized in the following corollary:
\begin{cor}[Diminishing bias] \label{cor:exp_decay}
    Let Assumptions~\ref{ass:meas_err}, \ref{ass:LTV-Lipschitz} and \ref{ass:PE} hold, and choose $\alpha,T,K$ such that Assumption \ref{ass:sufficient_decrease} holds. Furthermore, suppose that $c = 0$ and $\epsilon=0$. Then, the output $\{\estim{t}\}_{t>t_0}$ of Algorithm~\ref{alg:GGD-exp} is such that
    \begin{align*}
        d_2(\estim{t},\subs{t})^2 \leq \tilde{\rho}^{Kt} d_2(\estim{t_0},\subs{t_0})^2,\quad \forall t \in \mathbb{Z}_{\geq t_0+1}.
    \end{align*}
\end{cor}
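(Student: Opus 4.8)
The plan is to obtain Corollary~\ref{cor:exp_decay} as a direct specialization of Theorem~\ref{thm:main} to the noise- and drift-free regime. The first step is to observe that setting $c=0$ and $\epsilon=0$ forces the noise bound to vanish identically: from~\eqref{eq:delta_def_true}, $\delta_t = c\,\|W_t D\|_F + \epsilon\sqrt{\windowlength}\bigl(c(\windowlength-1)+1\bigr) = 0$ for every $t\geq\windowlength$, so Lemma~\ref{lemma:LTV-Hankel-noisy} also gives $\|P_{\subs{t}}^\perp W_t\|_F = 0$. Since $\gamma_{r_\mathrm{b}}\in\mathcal{K}_\infty$ (equivalently, $\gamma_{r_\mathrm{b}}$ is the polynomial~\eqref{eq:gamma_def} with zero constant term), it follows that $\gamma_{r_\mathrm{b}}(\|\delta\|_\infty) = \gamma_{r_\mathrm{b}}(0) = 0$.

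Next I would verify that the hypotheses of Theorem~\ref{thm:main} remain in force. Assumptions~\ref{ass:meas_err}--\ref{ass:PE} are assumed; the requirement $c\leq r_\mathrm{b}$ is trivial since $c=0$; and the signal-to-noise condition~\eqref{eq:SNR_ass} reduces, after substituting $\delta_t=0$ and $c=0$, to $0\leq(1-\tilde{\rho})r_\mathrm{b}^2$, which holds because $\tilde{\rho}\in(0,1)$ whenever $\alpha\in\bigl(0,\underline{\sigma}^2/(2\overline{\sigma}^4)\bigr)$ --- the range prescribed for the step size in Algorithm~\ref{alg:GGD-exp} and used in Assumption~\ref{ass:sufficient_decrease}. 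Hence Assumption~\ref{ass:sufficient_decrease} is consistent and Theorem~\ref{thm:main} applies.

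Finally, I would plug these simplifications into~\eqref{eq:main_thm_bound}. The second summand $\frac{1-\tilde{\rho}^{Kt}}{1-\tilde{\rho}}\gamma_{r_\mathrm{b}}(\|\delta\|_\infty)$ vanishes by the first step, and the third summand $\frac{1-\tilde{\rho}^{Kt}}{1-\tilde{\rho}^K}\tilde{\rho}^K(2r_\mathrm{b}-c)c$ vanishes because it carries the factor $c=0$. What remains is exactly
\begin{align*}
    d_2(\estim{t},\subs{t})^2 \leq \tilde{\rho}^{Kt}\, d_2(\estim{t_0},\subs{t_0})^2, \qquad \forall t\in\mathbb{Z}_{\geq t_0+1},
\end{align*}
which is the claimed inequality; diminishing bias then follows since $\tilde{\rho}^{Kt}\to 0$ as $t\to\infty$.

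There is essentially no obstacle here --- the only point worth a second look is confirming that Assumption~\ref{ass:sufficient_decrease} does not become vacuous when $\delta_t\equiv 0$, i.e., that $\alpha,\windowlength,K$ making~\eqref{eq:SNR_ass} hold can actually be chosen; as noted, this is immediate since the right-hand side is nonnegative. As an alternative to invoking Theorem~\ref{thm:main}, one could re-run its proof in this special case: Lemma~\ref{lemma:one-step-improvement-noisy} with $\delta_t=0$ gives $d_2(\estimGGD{t}{k+1},\subs{t})^2\leq d_2(\estimGGD{t}{k},\subs{t})^2-\rho\,\|\mathrm{grad}\, d_2(\estimGGD{t}{k},\subs{t})^2\|_F^2$, and the gradient dominance of Lemma~\ref{lemma:grad_dom} turns this into contraction by the factor $\tilde{\rho}$ per inner iteration; chaining $K$ inner iterations and then the outer iterations (using $\subs{t}\equiv\subs{}$ since $c=0$) yields the stated exponential decay.
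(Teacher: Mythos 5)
Your proposal is correct and follows essentially the same route as the paper: the claim is obtained by specializing Theorem~\ref{thm:main}, noting that $c=0$ and $\epsilon=0$ force $\delta_t\equiv 0$ via~\eqref{eq:delta_def_true}, so $\gamma_{r_\mathrm{b}}(\|\delta\|_\infty)=0$ because $\gamma_{r_\mathrm{b}}\in\mathcal{K}_\infty$, and the remaining bias term vanishes with the factor $c$. Your additional check that Assumption~\ref{ass:sufficient_decrease} remains satisfiable is a harmless (and welcome) extra.
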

The claim directly follows from~\eqref{eq:main_thm_bound}, the definition of $\delta_t$ in~\eqref{eq:delta_def_true}, and the fact that $\gamma_{r_\mathrm{b}} \in \mathcal{K}_\infty$.
The condition $c=0$ and $\epsilon = 0$ holds true, e.g., when \gls{LTI} systems are identified from exact measurements (c.f., Rem.~\ref{rem:Fundamental_lemma}).    

\begin{remark} \label{rem:step-size}
    The step size of the gradient descent update in Algorithm~\ref{alg:GGD-exp} is an important tuning parameter.
    Choosing its value from the interval $(0,\underline{\sigma}^2/(2\overline{\sigma}^4))$ guarantees that $\rho \in (0,1/8)$; see Figure~\ref{fig:rho} for a visualization.
    Further, this guarantees that the bound in~\eqref{eq:main_thm_bound} converges, as $1-4(1-r_\mathrm{b}^2)\rho = \tilde{\rho} \in (0,1)$ holds.
    The maximal convergence rate (minimal $\tilde{\rho}$) is achieved with $\alpha=\alpha^\mathrm{cvg} = \underline{\sigma}^2/(4\overline{\sigma}^4)$.
    On the other hand, the ultimate bound in~\eqref{eq:main_thm_ultimate_bound} is also affected by the step size $\alpha$.
    Namely, $\tilde{\rho}$ and $\gamma_r$ defined in Lemma~\ref{lemma:one-step-improvement-noisy} and Assumption~\ref{ass:sufficient_decrease} are rational functions of $\alpha$.
    In case the constants from Assumption~\ref{ass:meas_err}--\ref{ass:sufficient_decrease} are known, one can find the step size $\alpha^\mathrm{ub}$ numerically, for which the ultimate bound is minimized.
    The trade-off between these conflicting objectives is illustrated in the numerical example in Section~\ref{sec:ex1}.
\hfill\QED \end{remark}

\begin{remark} 
    Inequality~\eqref{eq:main_thm_bound} can be interpreted through the lens of nonlinear control, revealing a parallel to input-to-state inequalities~\cite{sontag1989smooth}.
    Specifically, the noise bound $\delta_t$ and the time-variation bound $c$ on the true subspace may be regarded as exogenous inputs.
    When the input is zero, the inequality ensures that the system's state decays in some norm, analogous to the squared distance $d_2(\estim{t}, \subs{t})^2$ in our setup (see Corollary~\ref{cor:exp_decay}).
    For non-zero inputs, $d_2(\estim{t}, \subs{t})^2$ remains bounded, with the bound scaling with the input size, as illustrated in~\eqref{eq:main_thm_ultimate_bound}.
\hfill \QED
\end{remark}

\begin{remark} \label{rem:forgetting_factor}
    We note that it is also possible to achieve adaptation by discounting the past data via an exponential forgetting factor $0<\gamma<1$.
    In this case, we can utilize all past data by defining the data matrix $W_t$ as
    \begin{equation*}
        W_t = [\gamma^{t-1}\traj{}{1}~\gamma^{t-2}\traj{}{2}~\dots~\gamma^0 u_t].
    \end{equation*}
    Even though the number of columns in $W_t$ grows over time, the final computations are not affected, since the matrix $W_t$ only appears as an inner product $W_tW_t^\top$, see Remark~\ref{rem:computation}. 
    This product can be computed by a rank-1 update as $W_{t+1}W_{t+1}^\top = \gamma^2 W_tW_t^\top + \traj{}{t+1}\traj{}{t+1}^\top$.
    Alternatively, one can also combine the two approaches by considering a moving data window with discounted past data.
    The theoretical analysis incorporating forgetting factors leads to a different bound in Lemma~\ref{lemma:LTV-Hankel-noisy}, which can be derived following similar arguments.
    We use a moving data window instead of a forgetting factor in this work, since this approach performed favorably in the online system identification problem considered in Section~\ref{sec:ex2}.
\hfill \QED
\end{remark}

\section{Numerical case studies} \label{sec:num-ex}
This section presents two numerical case studies.
First, we illustrate the theoretical properties of the proposed method through a synthetic example.
Second, we demonstrate how the \gls{GREAT} algorithm can be applied to online system identification as described in Section~\ref{sec:mot_ex}.
This example considers an \gls{LTV} airplane model adopted from the literature.
The MATLAB code reproducing both case studies is available online at \url{https://gitlab.ethz.ch/asasfi/ST_for_sysID}.

\subsection{Tracking a random geodesic} \label{sec:ex1}
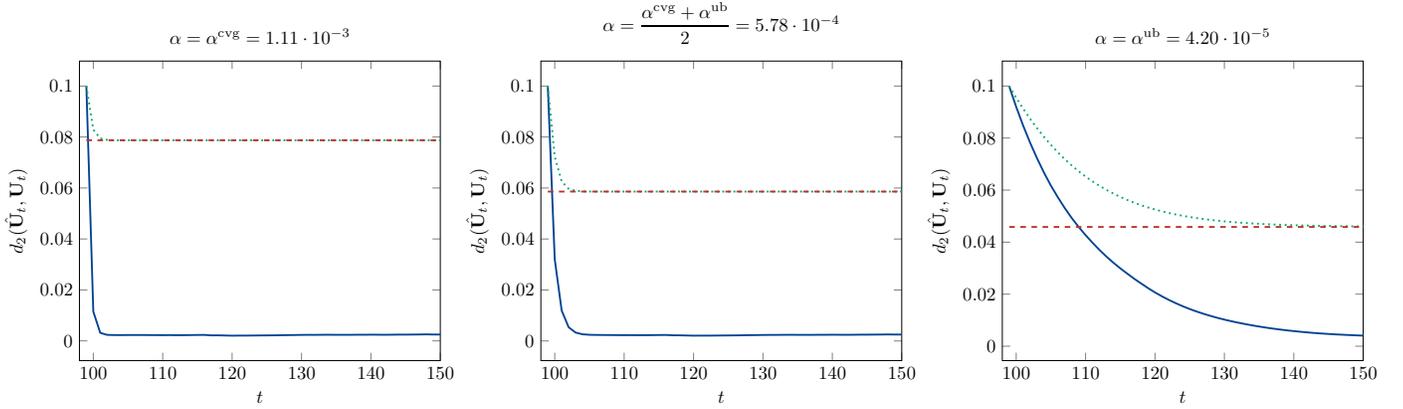
\begin{figure*}
    \centering
    \begin{tikzpicture}[scale=0.6,every node/.style={font=\Large}] 
\begin{axis}[
    title={$\alpha = \alpha^\mathrm{cvg} = 1.11\cdot10^{-3}$},
    name=plot1,
    xlabel={$t$},
    xmin=98, xmax=150,
    ylabel={$d_2(\estim{t},\subs{t})$},
yticklabel style={
        /pgf/number format/fixed,
        /pgf/number format/precision=5
},
scaled y ticks=false
]

\addplot[no marks, solid, myGreen, line width=1pt] table [x index=0,y index=1, col sep=comma] {figures/ex1_1_data.csv}; 
\addplot[no marks, myBlue, dotted, line width=1pt] table [x index=0,y index=2, col sep=comma] {figures/ex1_1_data.csv};
\addplot[no marks, dashed, myRed, line width=1pt] table [x index=0,y index=3, col sep=comma] {figures/ex1_1_data.csv}; 
\end{axis}

\begin{axis}[
    title={$\alpha = \dfrac{\alpha^{\mathrm{cvg}} + \alpha^{\mathrm{ub}}}{2} = 5.78 \cdot10^{-4}$},
    name=plot2,
    at=(plot1.right of north east), anchor=left of north west,
    xlabel={$t$},
    xmin=98, xmax=150,
    ylabel={$d_2(\estim{t},\subs{t})$},
    yticklabel style={
        /pgf/number format/fixed,
        /pgf/number format/precision=5
},
scaled y ticks=false
]
\addplot[no marks, solid, myGreen, line width=1pt] table [x index=0,y index=1, col sep=comma] {figures/ex1_2_data.csv}; 
\addplot[no marks, myBlue, dotted, line width=1pt] table [x index=0,y index=2, col sep=comma] {figures/ex1_2_data.csv}; 
\addplot[no marks, dashed, myRed, line width=1pt] table [x index=0,y index=3, col sep=comma] {figures/ex1_2_data.csv}; 
\end{axis}

\begin{axis}[
    title={$\alpha = \alpha^\mathrm{ub} = 4.20 \cdot10^{-5}$},
    name=plot3,
    at=(plot2.right of north east), anchor= left of north west,
    xlabel={$t$},
    xmin=98, xmax=150,
    ylabel={$d_2(\estim{t},\subs{t})$},
    yticklabel style={
        /pgf/number format/fixed,
        /pgf/number format/precision=5
},
scaled y ticks=false
]
\addplot[no marks, solid, myGreen, line width=1pt] table [x index=0,y index=1, col sep=comma] {figures/ex1_3_data.csv}; 
\addplot[no marks, myBlue, dotted, line width=1pt] table [x index=0,y index=2, col sep=comma] {figures/ex1_3_data.csv};
\addplot[no marks, dashed, myRed, line width=1pt] table [x index=0,y index=3, col sep=comma] {figures/ex1_3_data.csv};
\end{axis}

\end{tikzpicture}
    \caption{Illustration of the subspace tracking algorithm \gls{GREAT} with the derived theoretical bounds.    
    For different step sizes $\alpha$, the evolution of the distance between the estimates and the true subspace (yellow, solid) is shown with the bound from~\eqref{eq:main_thm_bound} (blue, dotted) and the ultimate bound from~\eqref{eq:main_thm_ultimate_bound} (red, dashed) in Theorem~\ref{thm:main}. The step size $\alpha^\mathrm{cvg}$ (left) maximizes the convergence rate, and $\alpha^\mathrm{ub}$ (right) minimizes the ultimate bound~(c.f., Remark~\ref{rem:step-size}).}
    \label{fig:ex1}
\end{figure*}
The goal of this synthetic example is to assess the conservatism of the bounds in Theorem~\ref{thm:main} and to illustrate the trade-off when selecting the step size $\alpha$, described in Remark~\ref{rem:step-size}. 
To better highlight the conservatism originating from the analysis, we construct the example such that the bounds in Assumptions~\ref{ass:meas_err}-\ref{ass:PE} are tight.
We generate a reference sequence of subspaces $\subs{t} \in \Grass{5}{3}$ for $t=0,\dots,150$ defined by a geodesic that starts from an initial subspace $\subs{0}$ and follows a tangent direction $V$.
The initial subspace $\subs{0}$ is spanned by the first $3$ unit vectors in $\mathbb{R}^5$. 
To generate a direction $V$, we first sample a matrix randomly from a standard normal distribution and then project it onto the tangent space of $\subs{0}$.
The spacing between consecutive subspaces is chosen such that Assumption~\ref{ass:LTV-Lipschitz} holds with $c=5 \cdot 10^{-5}$.
The samples are generated as $u_t = \subsST{t}\xi_t + e_t$, with $\xi_t \sim \mathcal{N}(0,I_3)$ and $e_t\sim \mathcal{N}(0,I_5)$, and $e_t$ was scaled such that $\|e_t\|_2 = \epsilon = 10^{-3}$ for all $t=1,\dots,150$ (c.f., Assumption~\ref{ass:meas_err}).
We choose the data window length to be $\windowlength = 100$, and perform $K=10$ iterations of the gradient descent method at each time $t$.
The constants with values $\underline{\sigma} = 8.49$ and $\overline{\sigma}= 11.28$ were computed, such that the inequalities in Assumption~\ref{ass:PE} are tight on the time interval $t=[100,150]$.
The initial estimate $\estim{99}$ is chosen from the set $\mathbb{B}_{r_\mathrm{b}}(\subs{100})$ with $r_\mathrm{b}=0.1$. We note that Assumption~\ref{ass:sufficient_decrease} is satisfied with the provided parameters.

We compute $\alpha^\mathrm{ub}$ (c.f., Remark~\ref{rem:step-size}) by minimizing the ultimate bound using the \textit{fmincon} function of MATLAB, and run the algorithm on the same dataset with step sizes $\alpha\in\left \{\alpha^\mathrm{cvg},~\frac{\alpha^\mathrm{cvg}+\alpha^\mathrm{ub}}{2},~\alpha^\mathrm{ub}\right \}$ to illustrate the trade-off between the convergence rate and the ultimate bound.
The distance between the resulting estimates and the true subspace, as well as the theoretical bounds are depicted in Figure~\ref{fig:ex1}.
Clearly, the fastest convergence is achieved with $\alpha^\mathrm{cvg}$ for both the theoretical bound and the true distance. 
On the other hand, the smallest ultimate bound is achieved with $\alpha^\mathrm{ub}$ at the expense of slower convergence.
In between these edge cases, the intermediate value $\alpha = 5.78\cdot10^{-4}$ provides a good trade-off, achieving both fast convergence and small ultimate bound.

\subsection{Online system identification} \label{sec:ex2}
The second example shows how the proposed method can be used for online identification of \gls{LTV} systems.
First, we compare our approach with two methods from the system identification toolbox of MATLAB that are often used as benchmarks.
Namely, we consider the \textit{N4SID} algorithm~\cite{ljung1999system,van1996subspace,van1994n4sid,verhaegen1994identification,larimore1990canonical} that estimates a state-space model using a subspace identification method at each step, and the \textit{recursiveLS} algorithm, that estimates the parameters of a multivariate ARX model online using recursive least squares technique~\cite{ljung1999system}.
Next, we focus on the subspace system model only, and compare the \gls{GREAT} algorithm with the GROUSE~\cite{balzano2010online} and \gls{PAST}~\cite{yang1995PAST} algorithms for subspace tracking.
We emphasize that only the proposed algorithm offers rigorous guarantees when the system, and correspondingly the reference subspace, is time-varying.
The identified models are then used to predict future outputs, and the prediction error is compared.
We assume that no prior system knowledge is available, and therefore, the hyperparameters of each identification method must be tuned by validation.

\subsubsection{Data generation}
We use the airplane model described in~\cite{astrom2008adaptive} to generate input-output data.
Different \gls{LTI} state-space representations of this third-order system are given under different flight conditions.
The airplane has $m=1$ input and we assume that the full state is measured, i.e., the number of outputs is $p=3$.
We simulate the behavior of the airplane when accelerating from 0.9 to 1.5 Mach number at the altitude of 35000 feet by linearly interpolating the elements of the state-space matrices between the two operating conditions.
We add a linear state-feedback $u_\mathrm{fb} = [-0.09, -0.8, 0]x$ that is stabilizing in both operating conditions, and discretize the closed-loop system leading to the \gls{LTV} state-space equations
    \begin{align*}
        x_{t+1} &= A^{\mathrm{stab}}_t x_t + B_t v_t, \\
        y_t & = x_t + \eta_t,
    \end{align*}
where the matrix $A^{\mathrm{stab}}_t$ includes the effect of the stabilizing feedback $u_\mathrm{fb}$, and $\eta_t \sim \mathcal{N}(0,0.01 I_3)$ is i.i.d. measurement error.
Note that the above equations can be transformed to the multivariate output error model structure~\cite[Sec. 4.2]{ljung1999system}, with output order $n_a=1$ and input order $n_b=1$.
We simulate the resulting discrete-time system for $1000$ steps starting from zero initial conditions and random inputs drawn from a normal distribution with unit variance.
We split the dataset into \textit{initialization}, \textit{validation}, and \textit{test} parts of equal size.
We simulate $100$ trajectories in the test part with different input and measurement error realizations to enable statistical analysis of the results.

\subsubsection{Implementation details}
We identify the restricted behavior online as
described in Section~\ref{sec:mot_ex} using the \gls{GREAT} algorithm and the GROUSE and \gls{PAST} algorithms.
We select the trajectory length to be $10~ (L=9)$, making the dimension of the embedding space $n=40$.
The dimension of the subspace is a hyperparameter, which we select $d=13$ from the range $\{11,12,13,14,15\}$, as it achieves the smallest prediction error on the validation set.
Note that $d=13$ is the correct dimension, since $m=1$ and $\sysorder=3$ for the data generating system, and $d=\sysorder +m(L+1)$ (compare equation~\eqref{eq:state-space_subspace}).
For the \gls{GREAT} and \gls{PAST} algorithms, we choose the data window length and the forgetting factor as $\windowlength = 120$ and $\beta = 0.985$ by validation from the ranges $\{30,60,90,120,150\}$ and $\{0.98,0.985,0.99,0.995,1\}$, respectively.
In practice, prior knowledge about the system can guide the selection of the hyperparameter ranges used for validation.
The algorithms are initialized as discussed in Section~\ref{sec:tracking} using the initialization data.
For improved performance, the gradient descent step on the manifold for \gls{GREAT} and GROUSE is implemented using the \textit{manopt} toolbox~\cite{manopt}, which optimizes the step size online by line-search.
The maximal number of steps is limited to $K=2$.
We predict the future outputs using the estimated behavior as described in Section~\ref{sec:mot_ex} with $T_\mathrm{ini}=T_\mathrm{fut}=5$.

The state-space model is estimated online by calling the \textit{N4SID} function at each time.
The order of the system is automatically selected as $k=3$ from the interval $\{1,2,\dots,10\}$ based on the initialization data, which is again the correct value.
To enhance adaptation, we discount the past data by a forgetting factor that is selected as $\beta~=~0.99$ from the range $\{0.98,0.985,0.99,0.995,1\}$ by validation.
We note that computationally more efficient algorithms for identifying state-space models recursively are available in the literature under the name of recursive subspace identification~\cite{verhaegen1991fast,oku2002recursive,lovera2000recursive,mercere2007convergence,mercere2008propagator}.
However, we do not consider computational aspects in this example.
In order to make predictions with the identified model, the state of the system is estimated using a Kalman filter.
For the ARX model, the output and input orders (c.f.,~\cite{ljung1999system}) are selected as $n_a=3$ and $n_b=1$, respectively, both from the interval $\{1,2,\dots,10\}$.
The rate of adaptation is controlled by the forgetting factor, which is judiciously selected as $0.985$ from the range $\{0.98,0.985,0.99,0.995,1\}$ by validation.
The initial coefficients are computed by minimizing the prediction error on the initialization data.

\subsubsection{Results}
We compare the different identification methods by their predictive performance on the test set.
The mean prediction error together with its standard deviation are computed across the $100$ trajectories in the test dataset.
At time $t$, we predict the next $T_\mathrm{fut}=5$ outputs denoted by $\hat{y}_i,~i=t+1,\dots,t+T_\mathrm{fut}$.
We assume that the outputs are available up to time $t$, while the inputs are known in the future as well.
Note that we simulate the state-space and ARX models to obtain the sequence of predictions.
The relative error is given as
\begin{align*}
    \text{relative prediction error} = \sqrt{\frac{\sum_{i=t+1}^{t+T_\mathrm{fut}} \| \hat{y}_i - y_i\|^2}{\sum_{i=t+1}^{t+T_\mathrm{fut}}  \|y_i\|^2}}.
\end{align*}

First, we compare the predictive performance of different systems models identified by the \gls{GREAT} algorithm, \textit{N4SID} and the \textit{recursiveLS} methods.
The mean relative prediction error and its standard deviation are depicted in Figure~\ref{fig:sys_id_example1}.
All the methods perform similarly both in terms of mean error and standard deviation, and successfully adapt the model online.
We note the \textit{N4SID} and \textit{recursiveLS} methods achieve adaptation by heuristically introducing a forgetting factor in the problem formulation.
Rigorous analysis for these type of methods is only available for \gls{LTI} systems.
On the other hand, the proposed method provides non-asymptotic theoretic certificates for \gls{LTV} systems as described in Section~\ref{sec:main}, while matching the performance of the other methods.
\begin{figure}
    \centering
    \begin{tikzpicture}
\begin{axis}[
    xlabel={t},
    xmin=0, xmax=330,
    ymin=0.05, ymax = 0.3,
    ylabel={Relative prediction error}
]
\addplot[name path=uc_n4sid,no marks,draw=none] table [x index=0,y index = 2, col sep = comma] {figures/ex2_1_data_rev.csv};
\addplot[name path=uc_arx,no marks,draw=none] table [x index=0,y index = 5, col sep = comma] {figures/ex2_1_data_rev.csv};
\addplot[name path=uc,no marks, draw=none] table [x index=0,y index = 8, col sep = comma] {figures/ex2_1_data_rev.csv};

\addplot[name path=lc_n4sid,no marks, draw=none] table [x index=0,y index = 3, col sep = comma] {figures/ex2_1_data_rev.csv};
\addplot[name path=lc_arx,no marks, draw=none] table [x index=0,y index = 6, col sep = comma] {figures/ex2_1_data_rev.csv};
\addplot[name path=lc,no marks, draw=none] table [x index=0,y index = 9, col sep = comma] {figures/ex2_1_data_rev.csv};

\addplot[myGreen!10] fill between [of=uc_n4sid and lc_n4sid];
\addplot[myRed!10] fill between [of=uc_arx and lc_arx];
\addplot[myBlue!10] fill between [of=uc and lc];

\addplot[no marks, myGreen, line width=1pt] table [x index=0,y index=1, col sep=comma] {figures/ex2_1_data_rev.csv}; 
\addplot[no marks, myRed, line width=1pt] table [x index=0,y index=4, col sep=comma] {figures/ex2_1_data_rev.csv}; 
\addplot[no marks, myBlue, line width=1pt] table [x index=0,y index=7, col sep=comma] {figures/ex2_1_data_rev.csv}; 

\end{axis}
\end{tikzpicture}
    \caption{Relative prediction error of the system models estimated by \textit{N4SID} recomputed online (yellow), \textit{recursiveLS} (red), and the proposed method (blue). The solid lines are the average errors across the $100$ test trajectories, and the shaded area denotes $+/-$ one standard deviation. Note that the standard deviations are very similar for the three methods, and hence, the shaded areas overlap.}
    \label{fig:sys_id_example1}
\end{figure}
To demonstrate robustness, we introduce a large measurement error in the test data at $t = 100$.
As shown in Figure~\ref{fig:sys_id_example2}, the proposed method and \textit{N4SID} computed online maintains a significantly smaller average error and standard deviation compared to \textit{recursiveLS}.
\begin{figure}
    \centering
    \begin{tikzpicture}
\def\fault{95}
\begin{axis}[
    xlabel={t},
    xmin=0, xmax=330,
    ymin=0, ymax = 2,
    ylabel={Relative prediction error}
]
\addplot[name path=uc_n4sid,no marks,draw=none] table [x index=0,y index = 2, col sep = comma] {figures/ex2_2_data_rev.csv};
\addplot[name path=uc_arx,no marks,draw=none] table [x index=0,y index = 5, col sep = comma] {figures/ex2_2_data_rev.csv};
\addplot[name path=uc,no marks, draw=none] table [x index=0,y index = 8, col sep = comma] {figures/ex2_2_data_rev.csv};

\addplot[name path=lc_n4sid,no marks, draw=none] table [x index=0,y index = 3, col sep = comma] {figures/ex2_2_data_rev.csv};
\addplot[name path=lc_arx,no marks, draw=none] table [x index=0,y index = 6, col sep = comma] {figures/ex2_2_data_rev.csv};
\addplot[name path=lc,no marks, draw=none] table [x index=0,y index = 9, col sep = comma] {figures/ex2_2_data_rev.csv};

\addplot[myGreen!20] fill between [of=uc_n4sid and lc_n4sid];
\addplot[myRed!20] fill between [of=uc_arx and lc_arx];
\addplot[myBlue!20] fill between [of=uc and lc];

\addplot[no marks, myGreen, line width=1pt] table [x index=0,y index=1, col sep=comma] {figures/ex2_2_data_rev.csv}; 
\addplot[no marks, myRed, line width=1pt] table [x index=0,y index=4, col sep=comma] {figures/ex2_2_data_rev.csv}; 
\addplot[no marks, myBlue, line width=1pt] table [x index=0,y index=7, col sep=comma] {figures/ex2_2_data_rev.csv};

\addplot[black,dashed] coordinates {(\fault, 0) (\fault, 2)};
\end{axis}
\end{tikzpicture}
    \caption{Relative prediction error of the system models estimated by \textit{N4SID} recomputed online (yellow)}, \textit{recursiveLS} (red), and the proposed method (blue). The solid lines are the average errors across the $100$ test trajectories, and the shaded area denotes $+/-$ one standard deviation. A large measurement error occurs at $t=100$ leading to a sudden increase in the error (dashed vertical line).
    After the large measurement error, the standard deviation of the prediction error produced by the \textit{recursiveLS} method increases to $2.64$, and thus, it is not displayed entirely.
    \label{fig:sys_id_example2}
\end{figure}

Next, we compare different subspace tracking algorithms for the identification of the subspace system model.
The evolution of relative prediction errors are displayed in Figure~\ref{fig:sys_id_example3}.
Clearly, the \gls{PAST} algorithm achieves significantly lower prediction errors than the GROUSE.
The proposed algorithm performs equally well as \gls{PAST}, while it is based on Grassmannian gradient descent that is also used in the GROUSE method.
Hence, by using theoretical tools from differential geometry, we are able to provide strong guarantees for the \gls{GREAT} algorithm even when tracking a time-varying reference.
\begin{figure}
    \centering
    \begin{tikzpicture}
\def\fault{95}
\begin{axis}[
    xlabel={t},
    xmin=0, xmax=330,
    ymin=0, ymax = 2,
    ylabel={Relative prediction error}
]

\addplot[name path=uc,no marks, draw=none] table [x index=0,y index = 8, col sep = comma] {figures/ex2_2_data_rev.csv};
\addplot[name path=uc_PAST,no marks,draw=none] table [x index=0,y index = 11, col sep = comma] {figures/ex2_2_data_rev.csv};
\addplot[name path=uc_GROUSE,no marks,draw=none] table [x index=0,y index = 14, col sep = comma] {figures/ex2_2_data_rev.csv};

\addplot[name path=lc,no marks, draw=none] table [x index=0,y index = 9, col sep = comma] {figures/ex2_2_data_rev.csv};
\addplot[name path=lc_PAST,no marks, draw=none] table [x index=0,y index = 12, col sep = comma] {figures/ex2_2_data_rev.csv};
\addplot[name path=lc_GROUSE,no marks, draw=none] table [x index=0,y index = 15, col sep = comma] {figures/ex2_2_data_rev.csv};

\addplot[myBlue!20] fill between [of=uc and lc];
\addplot[myGreen!20] fill between [of=uc_PAST and lc_PAST];
\addplot[myRed!20] fill between [of=uc_GROUSE and lc_GROUSE];

\addplot[no marks, myBlue, line width=1pt] table [x index=0,y index=7, col sep=comma] {figures/ex2_2_data_rev.csv}; 
\addplot[no marks, myGreen, line width=1pt] table [x index=0,y index=10, col sep=comma] {figures/ex2_2_data_rev.csv}; 
\addplot[no marks, myRed, line width=1pt] table [x index=0,y index=13, col sep=comma] {figures/ex2_2_data_rev.csv};

\addplot[black,dashed] coordinates {(\fault, 0) (\fault, 2)};
\end{axis}
\end{tikzpicture}
    \caption{Relative prediction error of the subspace system models estimated by the \textit{\gls{GREAT}} (blue), the \textit{\gls{PAST}} (yellow), and the GROUSE (red) algorithms. The solid lines are the average errors across the $100$ test trajectories, and the shaded area denotes $+/-$ one standard deviation. A large measurement error occurs at $t=100$ leading to a sudden increase in the error (dashed vertical line).}
    \label{fig:sys_id_example3}
\end{figure}

\section{Conclusions} \label{sec:conclusion}
We introduced the \gls{GREAT} algorithm that tracks time-varying subspaces based on gradient descent on the Grassmann manifold.
Under suitable assumptions on the online data, the uncertainty in the resulting estimates is quantified by an upper bound on their distance to the true subspace.
Furthermore, we guarantee exponential convergence to a stationary reference subspace in case of noise-free data.
In contrast to the existing literature, we are able to provide non-asymptotic guarantees for the \gls{GREAT} algorithm, even when the subspace represents a dynamical system.
Hence, the proposed scheme is suitable for online identification of linear time-varying dynamical systems, which is also demonstrated in one of the numerical examples.

Future work includes developing an adaptive data-driven control framework using the online estimated subspace (c.f., Subspace Predictive Control~\cite{favoreel1999spc}). 
Furthermore, the uncertainty quantification of the estimate can be incorporated to robustify the resulting control formulation.
Another possible application of the proposed algorithm is fault detection based on the change in system behavior represented by the subspace.
Finally, extensions of the \gls{GREAT} algorithm allowing for the adaptation of the subspace dimension could be explored~\cite{padoan2025distances}.

\section{Appendix}
\appendices

\section*{Proof of Lemma~\ref{lemma:LTV-Hankel-noisy}}
\begin{proof}
    The noisy data matrix can be written as 
    \begin{align*}
        W_t = \underbrace{[\ntraj{}{t-\windowlength+1}~\dots~\ntraj{}{t}]}_{:=\bar{W}_t} + \underbrace{[e_{t-\windowlength+1}~\dots~e_t]}_{:=E_t}.
    \end{align*}
    Using Lemma~\ref{lemma:gap_metric} and the fact that $\ntraj{t-L-i+1}{t-i} \in \subs{t-i}$ for all $i = 0,\dots,\windowlength-1$, the projection error of the nominal data in $\bar{W}_t$ can be bounded as
    \begin{align*}
    \|P_{\subs{t}}^\perp & \bar{W}_t \|_F^2 = {\sum_{i=0}^{\windowlength-1} \| P_{\subs{t}}^\perp \ntraj{t-L-i+1}{t-i}\|^2_2} \\
    & \leq {\sum_{i=0}^{\windowlength - 1} d_\infty(\subs{t},\subs{t-i})^2 \|\ntraj{t-L-i+1}{t-i}\|^2_2} \\
    & \leq {\sum_{i=0}^{\windowlength - 1} d_{2}(\subs{t},\subs{t-i})^2 \|\ntraj{t-L-i+1}{t-i}\|^2_2} \\
    & \leq {\sum_{i=1}^{\windowlength-1} \left( \sum_{k=0}^{i-1} d_{2}(\subs{t-k},\subs{t-k-1}) \right)^2 \|\ntraj{t-L-i+1}{t-i}\|^2_2} \\
    & \leq {\sum_{i=0}^{\windowlength-1} \left( i c \right)^2 \|\ntraj{t-L-i+1}{t-i}\|^2_2}  = c^2 \|\bar{W}_t D\|_F^2.
    \end{align*}
    Then 
    \begin{align*}
        \|P_{\subs{t}}^\perp {W}_t\|_F & \leq \|P_{\subs{t}}^\perp \bar{W}_t\|_F + \|P_{\subs{t}}^\perp E_t\|_F \\
        & \leq c\|\bar{W}_t D\|_F + \|E_t\|_F \\
        & \leq c\left(\|W_t D\|_F + \|E_t D\|_F\right) + \|E_t\|_F \\
        & \leq c\|W_t D\|_F + \|E_t\|_F\left(c\|D\|_2 + 1\right).
    \end{align*}       
    The claim follows from $\|E_t\|_F = \sqrt{\sum_{i=0}^{\windowlength-1} \|e_{t-i}\|_2^2} \leq \epsilon\sqrt{\windowlength}$ and $\|D\|_2\leq \windowlength-1$.
    Note that $\delta_t$ is strictly monotone, in fact, linear in $c$ and $\epsilon$.
\end{proof}

\section*{Proof of Lemma~\ref{lemma:one-step-improvement-noisy}}
In order to prove Lemma~\ref{lemma:one-step-improvement-noisy}, we first show that the Riemannian gradient $\mathrm{grad}~d_{2}(\mathbf{U},\mathbf{V})^2$ is Lipschitz continuous in the sense of~\cite[Def. 10.40]{boumal2023introduction}.
\begin{lemma} \label{lemma:Lip}
    The function $\mathrm{grad}~d_{2}(\mathbf{U},\mathbf{V})^2$ on the Grassmann manifold is $L$-Lipschitz continuous in the first argument in the sense of~\cite[Def. 10.40]{boumal2023introduction} with $L=4$.
\end{lemma}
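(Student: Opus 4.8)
The plan is to fix the second argument $\mathbf{V}$, regard $f(\mathbf{U}) := d_{2}(\mathbf{U},\mathbf{V})^2 = \mathrm{tr}(P_{\mathbf{U}}^\perp P_{\mathbf{V}})$ as a smooth function on $\Grass{n}{d}$, bound the operator norm of its Riemannian Hessian uniformly by $4$, and then invoke the classical fact that a global bound of magnitude $L$ on the Hessian makes the Riemannian gradient $L$-Lipschitz continuous in the sense of~\cite[Def.~10.40]{boumal2023introduction}; this last implication follows from a second-order Taylor expansion with integral remainder along minimizing geodesics, which exist here because $\Grass{n}{d}$ is compact, hence complete, and because the algorithm uses the exponential map as its retraction.

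To prepare the Hessian computation, I would work with the lift $\bar f(U) = d - \|V^\top U\|_F^2$, for which $\nabla\bar f(U) = -2P_{\mathbf{V}}U$ and, since $\bar f$ is quadratic in $U$, the Euclidean Hessian is the constant operator $\nabla^2\bar f(U)[H] = -2P_{\mathbf{V}}H$. To evaluate the Riemannian Hessian quadratic form at $\mathbf{U}$ along a tangent direction $\mathbf{H}$, I would pick a representative $U$ together with the horizontal lift $H$ (so that $U^\top H = 0$ and $\|\mathbf{H}\| = \|H\|_F$) and consider the geodesic $\gamma(t) = \mathrm{Exp}_{\mathbf{U}}(t\mathbf{H})$. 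From the exponential-map formula recalled in Section~\ref{sec:prelim}, its representative $U(t)$ satisfies $U(0) = U$, $\dot U(0) = H$, and $\ddot U(0) = -U\,H^\top H$. Since $\gamma$ is a geodesic, differentiating $t \mapsto \bar f(U(t))$ twice at $t = 0$ yields
\begin{align*}
    \mathrm{Hess}\,f(\mathbf{U})[\mathbf{H},\mathbf{H}] &= \langle \nabla^2\bar f(U)[H],H\rangle_F + \langle \nabla\bar f(U),\ddot U(0)\rangle_F \\
    &= -2\|P_{\mathbf{V}}H\|_F^2 + 2\,\mathrm{tr}\!\big(U^\top P_{\mathbf{V}}U\,H^\top H\big).
\end{align*}

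The remaining estimate is elementary. Orthogonal projections are non-expansive, so $\|P_{\mathbf{V}}H\|_F \leq \|H\|_F$. Moreover $U^\top P_{\mathbf{V}}U = (V^\top U)^\top(V^\top U) \succeq 0$ has $\|U^\top P_{\mathbf{V}}U\|_2 = \|V^\top U\|_2^2 \leq 1$, because $V^\top U$ is a product of matrices with orthonormal columns, and $H^\top H \succeq 0$ with $\mathrm{tr}(H^\top H) = \|H\|_F^2$, so $\mathrm{tr}(U^\top P_{\mathbf{V}}U\,H^\top H) \leq \|H\|_F^2$. Hence $|\mathrm{Hess}\,f(\mathbf{U})[\mathbf{H},\mathbf{H}]| \leq 4\|\mathbf{H}\|^2$ for all $\mathbf{U}$ and $\mathbf{H}$, and self-adjointness of the Riemannian Hessian then gives $\|\mathrm{Hess}\,f(\mathbf{U})\|_{\mathrm{op}} \leq 4$, i.e., $L = 4$.

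I expect the only real obstacle to be the differential-geometric bookkeeping: carefully justifying $\ddot U(0) = -U\,H^\top H$ and the displayed identity for $\mathrm{Hess}\,f(\mathbf{U})[\mathbf{H},\mathbf{H}]$ through the quotient (Stiefel-over-Grassmann) geometry, and confirming that the ``bounded Hessian implies Lipschitz gradient'' step is stated precisely in the form of~\cite[Def.~10.40]{boumal2023introduction} that is subsequently used in the proof of Lemma~\ref{lemma:one-step-improvement-noisy}. Everything else is a one-line norm inequality.
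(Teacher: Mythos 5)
Your proposal is correct and takes essentially the same route as the paper: both reduce the claim to a uniform bound of $4$ on the operator norm of the Riemannian Hessian of $d_2(\mathbf{U},\mathbf{V})^2$ and then invoke the equivalence between a bounded Hessian and a Lipschitz gradient from~\cite[Cor.~10.47]{boumal2023introduction}. The only difference is cosmetic: the paper cites the closed-form Hessian $\mathrm{Hess}\,d_2(\mathbf{U},\mathbf{V})^2[X] = 2P_{\mathbf{U}}^\perp P_{\mathbf{V}}^\perp X - 2XU^\top P_{\mathbf{V}}^\perp U$ from~\cite[Ex.~9.49]{boumal2023introduction} and bounds its operator norm by submultiplicativity, whereas you rederive the (equivalent) quadratic form $-2\|P_{\mathbf{V}}H\|_F^2 + 2\,\mathrm{tr}(U^\top P_{\mathbf{V}}U\,H^\top H)$ along geodesics and use self-adjointness — a computation whose two terms have opposite signs, so it would in fact yield the sharper constant $2$ if you avoided the final triangle inequality.
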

\begin{proof}
    Note that by~\cite[Cor. 10.47]{boumal2023introduction} the gradient of a twice differentiable function is $L$-Lipschitz continuous if and only if its Hessian has operator norm bounded by $L$. 
    The Riemannian Hessian of $d_{2}(\mathbf{U},\mathbf{V})^2 = \mathrm{tr}(U^\top P_{\mathbf{V}}^\perp U)$ with respect to the first argument is given as (see \cite[Example 9.49, Eq. (9.69)]{boumal2023introduction})
\begin{align*}
    \mathrm{Hess}~ d_{2}(\mathbf{U},\mathbf{V})[X] & = 2P_{\mathbf{U}}^\perp P_{\mathbf{V}}^\perp X - 2 X U^\top P_{\mathbf{V}}^\perp U.
\end{align*} 
Note that we use the Frobenius inner product, and hence, the operator norm is defined using the Frobenius norm as
\begin{align*}
    \|&\mathrm{Hess} ~ d_{2}(\mathbf{U},\mathbf{V})\| = \max_{\stackrel{X \in T_{\mathbf{U}} \Grass{n}{d}}{\|X\|_F=1}} \|\mathrm{Hess}~d_{2}(\mathbf{U},\mathbf{V}) [X]\|_F \\ 
    & = \max_{\stackrel{X \in T_{\mathbf{U}} \Grass{n}{d}}{\|X\|_F=1}} \|2P_{\mathbf{U}}^\perp P_{\mathbf{V}}^\perp X - 2XU^\top P_{\mathbf{V}}^\perp U\|_F \\ 
    & \leq \max_{\stackrel{X \in T_{\mathbf{U}} \Grass{n}{d}}{\|X\|_F=1}} \|2P_{\mathbf{U}}^\perp P_{\mathbf{V}}^\perp X\|_F + \|2XU^\top P_{\mathbf{V}}^\perp U\|_F \\
    & \leq \max_{\stackrel{X \in T_{\mathbf{U}} \Grass{n}{d}}{\|X\|_F=1}} 2\|P_{\mathbf{U}}^\perp P_{\mathbf{V}}^\perp \|_2~\|X\|_F + 2\|U^\top P_{\mathbf{V}}^\perp U\|_2~\|X\|_F \\
    & = 2\|P_{\mathbf{U}}^\perp P_{\mathbf{V}}^\perp \|_2 + 2\|U^\top P_{\mathbf{V}}^\perp U\|_2 \leq 2 + 2\|P_{\mathbf{V}}^\perp U\|_2^2 \\
    & \leq 2 + 2\|U\|^2_2 = 4.
\end{align*}
\end{proof}

In the following, we provide the proof of Lemma~\ref{lemma:one-step-improvement-noisy} using the result in Lemma~\ref{lemma:Lip}.

\begin{proof}
We first exploit the Lipschitzness of $\mathrm{grad}~d_{2}(\mathbf{U},\mathbf{V})^2$, and separate it into a nominal and a noise term in Part I.
The nominal part of the gradient contains the signal $P_{\subs{t}} W_t$ only, and it is analyzed in Part II.
We bound all terms in the gradient that are affected by the noise $P_{\subs{t}}^\perp W_t$ in Part III.

\textbf{Part I}
By Lemma~\ref{lemma:Lip}, the Riemannian gradient of the squared chordal distance is Lipschitz continuous with constant $L = 4$. Therefore, by~\cite[Cor. 10.54]{boumal2023introduction}, it holds that
    \begin{align} \label{eq:thm-one-step-improvement-1}
    \begin{split}
        d_{2}(\estimGGD{t}{k+1},\subs{t})^2 & - d_{2}(\estimGGD{t}{k},\subs{t})^2 \\
        \leq &-\alpha \langle \mathrm{grad}~ d_{2}(\estimGGD{t}{k},\subs{t})^2,\mathrm{grad}~ F_{W_t}(\estimGGD{t}{k}) \rangle_F \\
        & +2\alpha^2 \|\mathrm{grad}~ F_{W_t}(\estimGGD{t}{k})\|_F^2.
        \end{split}
    \end{align}
    The matrix $W_t$ can be written as $W_t = P_{\subs{t}}W_t  + P_{\subs{t}}^\perp W_t$. Similarly, we decompose the gradient $\mathrm{grad}~F_{W_t}(\estimGGD{t}{k})$ as $\mathrm{grad}~ F_{W_t}(\estimGGD{t}{k}) = \bar{G}_t + \gradnoise$, with
    \begin{align*}
        \bar{G}_t := & -2P_{\estimGGD{t}{k}}^\perp P_{\subs{t}} W_t W_t^\top P_{\subs{t}} \estimST{}, \\
        \gradnoise  := & -2P_{\estimGGD{t}{k}}^\perp \left(P_{\subs{t}}^\perp W_t W_t^\top P_{\subs{t}}^\perp + P_{\subs{t}}^\perp W_t W_t^\top P_{\subs{t}} \right.\\
        & \left. +P_{\subs{t}} W_t W_t^\top P_{\subs{t}}^\perp \right) \estimST{}.
    \end{align*}
    With this partitioning,~\eqref{eq:thm-one-step-improvement-1} becomes
    \begin{align} \label{eq:thm-one-step-improvement-2}
        \begin{split}
        d_{2}&(\estimGGD{t}{k+1},\subs{t})^2 - d_{2}(\estimGGD{t}{k},\subs{t})^2 \\ 
        \leq & -\alpha \left( \langle \mathrm{grad}~ d_{2}(\estimGGD{t}{k},\subs{t})^2,\bar{G}_t \rangle_F + \langle \mathrm{grad}~ d_{2}(\estimGGD{t}{k},\subs{t})^2,\gradnoise\rangle_F\right)\\
        & +2\alpha^2 \left(\|\bar{G}_t\|_F^2 + 2\langle\bar{G}_t,\gradnoise\rangle_F + \|\gradnoise\|_F^2 \right).
        \end{split}
    \end{align}
    
    \textbf{Part II} First, we analyze the nominal part of inequality~\eqref{eq:thm-one-step-improvement-2}. The gradient of the squared chordal distance on the Grassmann manifold with respect to $\estimGGD{t}{k}$ is $\mathrm{grad}~ d_{2}(\estimGGD{t}{k},\subs{t})^2 = -2 P_{\estimGGD{t}{k}}^\perp P_{\subs{t}} \estimST{}.$ Furthermore, consider the compact \gls{SVD} $\subsST{t}^\top W_t = Q_1 \Sigma Q_2^\top$. Then
    \begin{align*}
        \langle & \mathrm{grad}~ d_{2}(\estimGGD{t}{k},\subs{t})^2,\bar{G}_t \rangle_F \\
        &= 4 \mathrm{tr}(\estimST{}^\top P_{\subs{t}} P^\perp_{\estimGGD{t}{k}} P_{\subs{t}} W_t W_t^\top P_{\subs{t}} \estimST{}) \\
        & = 4 \mathrm{tr}(\subsST{t}^\top P_{\estimGGD{t}{k}} P_{\subs{t}} P^\perp_{\estimGGD{t}{k}} \subsST{t} \subsST{t}^\top W_t W_t^\top \subsST{t}) \\
        & = 4 \mathrm{tr}(Q_1^\top \subsST{t}^\top P_{\estimGGD{t}{k}} P_{\subs{t}} P^\perp_{\estimGGD{t}{k}} \subsST{t} Q_1\Sigma^2) \\
        & \geq \underline{\sigma}^2 4\mathrm{tr}(Q_1^\top \subsST{t}^\top P_{\estimGGD{t}{k}} P_{\subs{t}} P^\perp_{\estimGGD{t}{k}} \subsST{t} Q_1)\\
        & = \underline{\sigma}^2 4\mathrm{tr}( \estimST{}^\top P_{\subs{t}} P^\perp_{\estimGGD{t}{k}} P_{\subs{t}} \estimST{}) = \underline{\sigma}^2 \|\mathrm{grad}~ d_{2}(\estimGGD{t}{k},\subs{t})^2\|_F^2,
    \end{align*}
    where we used the fact that $\Sigma \geq \underline{\sigma} I_d$ by  Assumption~\ref{ass:PE}, and 
    \begin{align*}
        Q_1^\top \subsST{t}^\top & P_{\estimGGD{t}{k}} P_{\subs{t}} P^\perp_{\estimGGD{t}{k}} \subsST{t} Q_1\\
        & = Q_1^\top \left ( \subsST{t}^\top P_{\estimGGD{t}{k}} \subsST{t} - \subsST{t}^\top P_{\estimGGD{t}{k}} P_{\subs{t}} P_{\estimGGD{t}{k}} \subsST{t} \right) Q_1 \\
        & = Q_1^\top \left( \subsST{t}^\top P_{\estimGGD{t}{k}} I P_{\estimGGD{t}{k}} \subsST{t}  - \subsST{t}^\top P_{\estimGGD{t}{k}} P_{\subs{t}} P_{\estimGGD{t}{k}} \subsST{t} \right)Q_1 \\
        & = Q_1^\top \subsST{t}^\top P_{\estimGGD{t}{k}} P^\perp_{\subs{t}}P_{\estimGGD{t}{k}} \subsST{t} Q_1 \\
        & = Q_1^\top \subsST{t}^\top P_{\estimGGD{t}{k}} P^\perp_{\subs{t}} P^\perp_{\subs{t}} P_{\estimGGD{t}{k}} \subsST{t} Q_1  \geq 0,
    \end{align*}
    due to the properties of projection matrices. 
    Similarly, the following upper bound holds
    \begin{align*}
        \|\bar{G}_t\|_F^2 & = 4 \mathrm{tr}(\estimST{}^\top P_{\subs{t}} W_t W_t^\top P_{\subs{t}} P^\perp_{\estimGGD{t}{k}} P_{\subs{t}} W_t W_t^\top P_{\subs{t}} \estimST{}) \\ 
        & \leq  \overline{\sigma}^2 4 \mathrm{tr}(Q_1^\top \subsST{t}^\top P_{\estimGGD{t}{k}} P_{\subs{t}} W_t W_t^\top P_{\subs{t}} P^\perp_{\estimGGD{t}{k}} \subsST{t} Q_1) \\
        & = \overline{\sigma}^2 4 \mathrm{tr}(\subsST{t}^\top P^\perp_{\estimGGD{t}{k}} P_{\subs{t}} P_{\estimGGD{t}{k}} \subsST{t} \subsST{t}^\top W_t W_t^\top \subsST{t}) \\
        & \leq \overline{\sigma}^4 4 \mathrm{tr}(\subsST{t}^\top P^\perp_{\estimGGD{t}{k}} P_{\subs{t}} P_{\estimGGD{t}{k}} \subsST{t}) \\
        &= \overline{\sigma}^4 \|\mathrm{grad}~ d_{2}(\estimGGD{t}{k},\subs{t})^2\|_F^2.
    \end{align*}
    
    \textbf{Part III} Now, we bound the terms containing $\gradnoise$. The Cauchy-Schwarz inequality yields
    \begin{align*}
        |\langle  \mathrm{grad}~ d_{2}(\estimGGD{t}{k},\subs{t})^2,\gradnoise \rangle_F| \leq & \|\mathrm{grad}~ d_{2}(\estimGGD{t}{k},\subs{t})^2\|_F \|\gradnoise\|_F, \\
        |\langle \bar{G}_t,\gradnoise\rangle_F| \leq & \|\bar{G}_t\|_F \|\gradnoise\|_F.
    \end{align*}   
    Furthermore,
    \begin{align*}
        \|\mathrm{grad}~ d_{2}(\estimGGD{t}{k},\subs{t})^2\|_F & = \|2 P_{\estimGGD{t}{k}}^\perp P_{\subs{t}} \estimST{}\|_F \leq \|2 P_{\estimGGD{t}{k}}^\perp P_{\subs{t}} \|_F \leq 2r, \\
        \|\bar{G}_t\|_F & \leq \overline{\sigma}^2 \|\mathrm{grad}~ d_{2}(\estimGGD{t}{k},\subs{t})^2\|_F \leq 2r\overline{\sigma}^2.
    \end{align*}
    Combining the results of Part II and III with~\eqref{eq:thm-one-step-improvement-2} yields
    \begin{align*}
    \begin{split}
        d_{2}&(\estimGGD{t}{k+1},\subs{t})^2 - d_{2}(\estimGGD{t}{k},\subs{t})^2 \\ 
        \leq & \left( -\alpha \underline{\sigma}^2 + 2\alpha^2\overline{\sigma}^4 \right) \|\mathrm{grad}~ d_{2}(\estimGGD{t}{k},\subs{t})^2\|_F^2 \\
        & + 2\alpha r \|\gradnoise\|_F + 8\alpha^2 r \overline{\sigma}^2 \|\gradnoise\|_F + 2\alpha^2 \|\gradnoise\|_F^2.
    \end{split}
    \end{align*}
    Finally, the triangle inequality and Lemma~\ref{lemma:LTV-Hankel-noisy} lead to
    \begin{align*}
        \|\gradnoise\|_F \leq & 2\|P_{\subs{t}}^\perp W_t W_t^\top P_{\subs{t}}^\perp + P_{\subs{t}}^\perp W_t W_t^\top P_{\subs{t}} 
        \\
        & +P_{\subs{t}} W_t W_t^\top P_{\subs{t}}^\perp\|_F \\
        \leq & 2\|P_{\subs{t}}^\perp W_t W_t^\top P_{\subs{t}}^\perp\|_F + 4\|P_{\subs{t}} W_t W_t^\top P_{\subs{t}}^\perp\|_F \\
        \leq & 2 \|P_{\subs{t}}^\perp W_t\|_F^2 + 4 \|P_{\subs{t}} W_t\|_2~ \|P_{\subs{t}}^\perp W_t \|_F  \\
        \leq & 2 \delta_t^2 + 4 \overline{\sigma} \delta_t.
    \end{align*}
    Together these yield the right-hand side bound in Lemma~\ref{lemma:one-step-improvement-noisy}.
\end{proof}

\section*{Proof of Lemma~\ref{lemma:grad_dom}}
\begin{proof}
    For any $\mathbf{U},\tilde{\mathbf{U}} \in \Grass{n}{d}$ such that $\tilde{\mathbf{U}} \in \mathbb{B}_{r_\mathrm{b}}(\mathbf{U})$, the squared chordal distance satisfies the gradient dominance property 
\begin{align} \label{eq:main_thm_GD}
\begin{split}
    \|\mathrm{grad}~ d_{2}(\mathbf{U}, \tilde{\mathbf{U}})^2\|_F^2 & =  4\mathrm{tr}(P_{\mathbf{U}}P_{\tilde{\mathbf{U}}}P_{\mathbf{U}}^\perp P_{\tilde{\mathbf{U}}}) \\
    & = 4\mathrm{tr}(\tilde{U}^\top U U^\top \tilde{U} \tilde{U}^\top P_{\mathbf{U}}^\perp \tilde{U}) \\
    & \geq 4 \min_i \cos^2(\theta_i) \mathrm{tr}(\tilde{U}^\top P_{\mathbf{U}}^\perp \tilde{U}) \\
    & = 4 \min_i (1 - \sin^2(\theta_i)) d_{2}(\mathbf{U}, \tilde{\mathbf{U}})^2 \\
    & \geq 4(1-r_\mathrm{b}^2)d_{2}(\mathbf{U}, \tilde{\mathbf{U}})^2,
    \end{split}
\end{align}
where $\theta_i$, $i=1,\dots,d$ are the principal angles between $\mathbf{U}$ and $\tilde{\mathbf{U}}$. Therefore, it is sufficient to show that $\estimGGD{t}{k} \in \mathbb{B}_{r_\mathrm{b}}(\subs{t})$ holds for all $t\geq \windowlength$ and $k=0,\dots,K$.

First, we show by induction that the set $\mathbb{B}_{r_\mathrm{b}}(\subs{t})$ is forward invariant under the gradient descent update~\eqref{eq:GGD_update_rule}. For a fixed $t\geq \windowlength$, assume that $\estimGGD{t}{k} \in \mathbb{B}_{r_\mathrm{b}}(\subs{t})$ for some $k=0,\dots,K-1$.
Note that with the choice of $\alpha$ in Algorithm~\ref{alg:GGD-exp}, we have $\rho >0$, and hence, $\tilde{\rho} <1$. 
Therefore, we can combine the results of Lemma~\ref{lemma:one-step-improvement-noisy} with $r = r_\mathrm{b}$, and inequality~\eqref{eq:main_thm_GD} yielding
\begin{align*}
    d_{2}(\estimGGD{t}{k+1},\subs{t})^2 \leq & \tilde{\rho} d_2(\estimGGD{t}{k},\subs{t})^2 +\gamma_{r_\mathrm{b}}(\delta_t) \\
    \leq & \tilde{\rho} r_\mathrm{b}^2 +\gamma_{r_\mathrm{b}}(\delta_t).
\end{align*}
Furthermore, Assumption~\ref{ass:sufficient_decrease} implies
\begin{align*}
    \gamma_{r_\mathrm{b}}(\delta_t) \leq (1-\tilde{\rho})r_\mathrm{b}^2,    
\end{align*}
and therefore, $\estimGGD{t}{k+1} \in \mathbb{B}_{r_\mathrm{b}}(\subs{t})$ holds.

Next, we show that for any $t\geq \windowlength$, $\estim{t-1} = \estimGGD{t}{0} \in \mathbb{B}_{r_\mathrm{b}}(\subs{t})$ implies $\estim{t} = \estimGGD{t}{K} \in \mathbb{B}_{r_\mathrm{b}}(\subs{t+1})$, as illustrated in Figure~\ref{fig:invar}.
Due to the invariance of $\mathbb{B}_{r_\mathrm{b}}(\subs{t})$ under the gradient descent update~\eqref{eq:GGD_update_rule}, $\estimGGD{t}{k} \in \mathbb{B}_{r_\mathrm{b}}(\subs{t})$ holds at time $t-1$ for all $k=0,\dots,K$, and hence, inequality~\eqref{eq:main_thm_GD} applies.
We apply the bound in Lemma~\ref{lemma:one-step-improvement-noisy} recursively to get
\begin{align}
\begin{split} \label{eq:lemma_4_proof2}
    d_{2}(\estimGGD{t}{K},\subs{t})^2 \leq & \tilde{\rho}^K d_{2}(\estimGGD{t}{0},\subs{t})^2 +\sum_{k=0}^{K-1} \tilde{\rho}^k \gamma_{r_\mathrm{b}}(\delta_t) \\
    \leq & \tilde{\rho}^K r_\mathrm{b}^2 + \frac{1-\tilde{\rho}^K}{1-\tilde{\rho}} \gamma_{r_\mathrm{b}}(\delta_t) \leq (r_\mathrm{b}-c)^2,
\end{split}
\end{align}
where we used Assumption~\ref{ass:sufficient_decrease} in the last step. Furthermore, due to the triangle inequality and Assumption~\ref{ass:LTV-Lipschitz}, we have
\begin{align*}
    d_{2}(\estimGGD{t+1}{K},\subs{t+1}) \leq d_{2}(\estimGGD{t}{K},\subs{t}) + d_{2}(\subs{t},\subs{t+1}) \leq r_\mathrm{b},
\end{align*}
which completes the proof.
\end{proof}

\section*{Proof of Theorem~\ref{thm:main}}
\begin{proof}
From the proof of Lemma~\ref{lemma:grad_dom}, we have
\begin{align*}
    d_{2}(\estim{t},\subs{t})^2 \leq \tilde{\rho}^K d_{2}(\estim{t-1},\subs{t})^2 + \frac{1-\tilde{\rho}^K}{1-\tilde{\rho}} \gamma_{r_\mathrm{b}}(\delta_t).
\end{align*}
Therefore, the following bound holds for any $t\geq \windowlength$
\begin{align*}
    d_{2}(\estim{t},\subs{t})^2 \leq & \tilde{\rho}^K d_{2}(\estim{t-1},\subs{t})^2 + \frac{1-\tilde{\rho}^K}{1-\tilde{\rho}} \gamma_{r_\mathrm{b}}(\|\delta\|_\infty) \\
    \leq & \tilde{\rho}^K (d_{2}(\estim{t-1},\subs{t-1}) + c)^2 \\ 
    & + \frac{1-\tilde{\rho}^K}{1-\tilde{\rho}} \gamma_{r_\mathrm{b}}(\|\delta\|_\infty) \\
    \leq & \tilde{\rho}^K d_{2}(\estim{t-1},\subs{t-1})^2 + \tilde{\rho}^K(2r_\mathrm{b}- c)c \\
    & + \frac{1-\tilde{\rho}^K}{1-\tilde{\rho}} \gamma_{r_\mathrm{b}}(\|\delta\|_\infty),
\end{align*}
where we used the facts that $d_2(\estim{t-1},\subs{t-1}) \leq r_\mathrm{b}-c$ from inequality~\eqref{eq:lemma_4_proof2}, and $\gamma_{r_\mathrm{b}}(\delta_t) \leq \gamma_{r_\mathrm{b}}(\|\delta\|_\infty)$ for any $t \geq \windowlength$.
We apply the above inequality recursively leading to
\begin{align*}
    d_{2}(\estim{t},&\subs{t})^2 \leq \tilde{\rho}^{Kt} d_{2}(\estim{0},\subs{0})^2 \\
    & + \sum_{\tau=0}^{t-1} (\tilde{\rho}^{K})^\tau \left( \frac{1-\tilde{\rho}^K}{1-\tilde{\rho}} \gamma_{r_\mathrm{b}}(\|\delta\|_\infty) + \tilde{\rho}^K (2r_\mathrm{b}-c)c \right).
\end{align*}
Substituting the solution of the geometric series completes the proof.
\end{proof}

\section*{References}
\bibliographystyle{ieeetr}        
\bibliography{Literature}

\begin{IEEEbiography}[{\includegraphics[width=1in,height=1.25in,trim={0 5mm 0 0},clip,keepaspectratio]{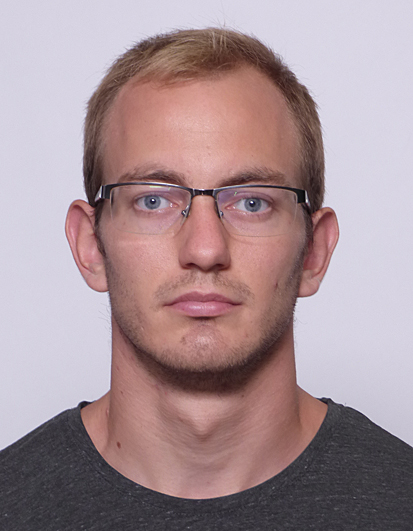}}]{Andr\'as Sasfi}
is a Ph.D. student with the Automatic Control Laboratory at ETH Z\"urich.
He received his Bachelor degree in Mechanical Engineering from the Budapest University of Technology and Economics, Hungary, in 2019.
He received his Master degree also in Mechanical Engineering from ETH Z\"urich in 2022.
His research interests include online system identification and data-driven control in the behavioral setting.
\end{IEEEbiography}

\begin{IEEEbiography}[{\includegraphics[width=1in,height=1.25in,trim={20cm 0 20cm 0},clip,keepaspectratio]{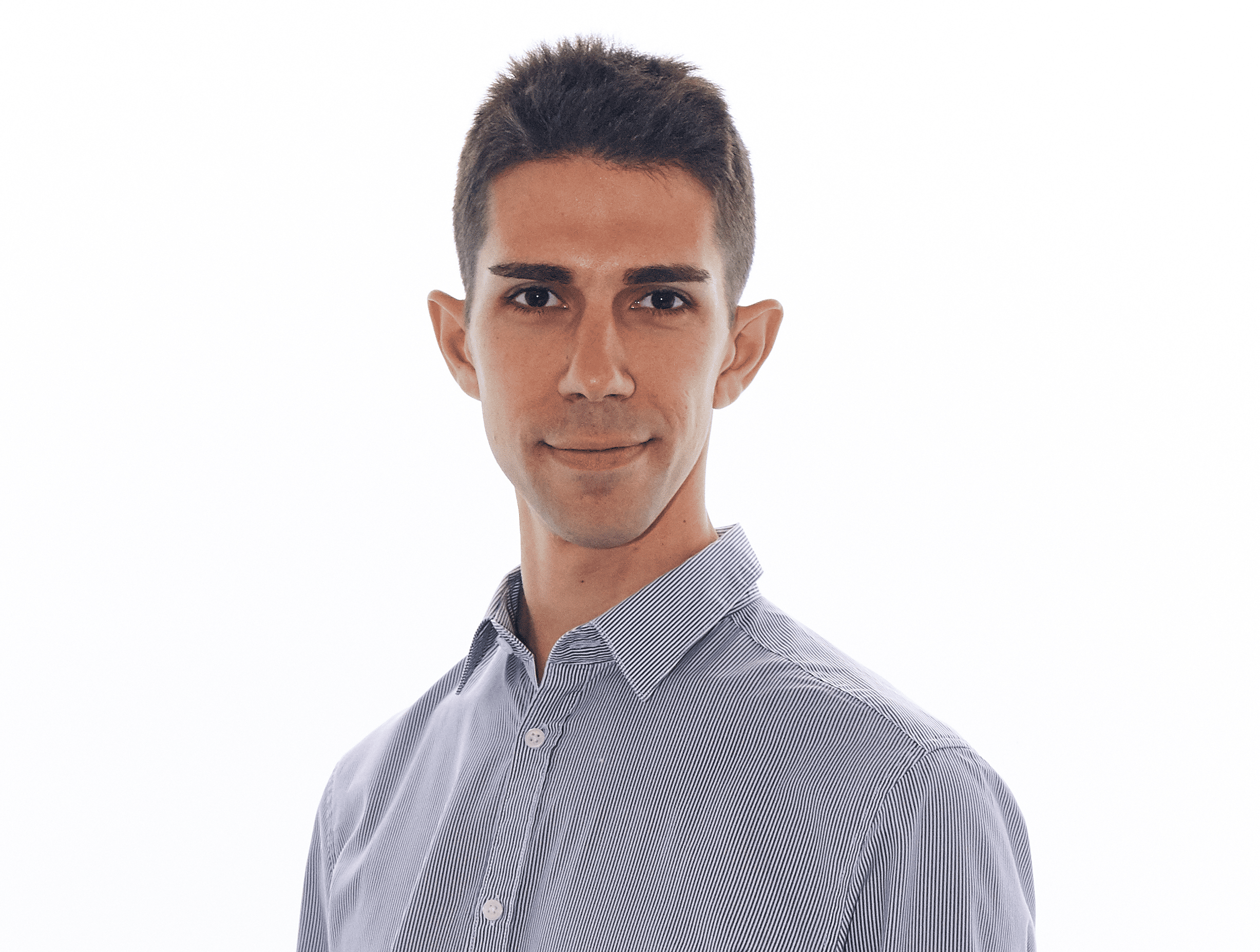}}]{Alberto Padoan}
(M’10) was born in Ferrara, Italy, in 1989. He received the Laurea Magistrale degree (M.Sc. equivalent) cum laude in Automation Engineering from the University of Padua in 2013 and the Ph.D. degree in Control Theory from Imperial College London in 2018, where his dissertation received the IET Control \& Automation Doctoral Dissertation Prize (2018) and the Eryl Cadwallader Davies Prize (2020). From 2017 to 2021, he was a Research Associate in the Control Group at the University of Cambridge and a member of Sidney Sussex College. In 2021, he joined the Automatic Control Laboratory at ETH Zurich and the NCCR Automation, first as a Postdoctoral Scholar and later as a Senior Scientist. 

Alberto is currently an Assistant Professor in the Department of Electrical and Computer Engineering at the University of British Columbia. His research focuses on the modeling, analysis, and control of complex dynamical systems, with special emphasis on biological and cyber-physical applications.
\end{IEEEbiography}

\begin{IEEEbiography}[{\includegraphics[width=1in,height=1.25in,clip,keepaspectratio]{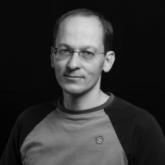}}]{{I}van Markovsky}
is an ICREA professor at the International Centre for Numerical Methods in Engineering, Barcelona. He received his Ph.D. degree in Electrical Engineering from the Katholieke Universiteit Leuven in February 2005. From 2006 to 2012 he was an Assistant Professor at the School of Electronics and Computer Science of the University of Southampton and from 2012 to 2022 an Associate Professor at the Vrije Universiteit Brussel. He is a recipient of an ERC starting grant "Structured low-rank approximation: Theory, algorithms, and applications" 2010--2015, Householder Prize honorable mention 2008, and research mandate by the Vrije Universiteit Brussel research council 2012--2022. His main research interests are computational methods for system theory, identification, and data-driven control in the behavioral setting.
\end{IEEEbiography}

\begin{IEEEbiography}[{\includegraphics[width=1in,height=1.25in,clip,keepaspectratio]{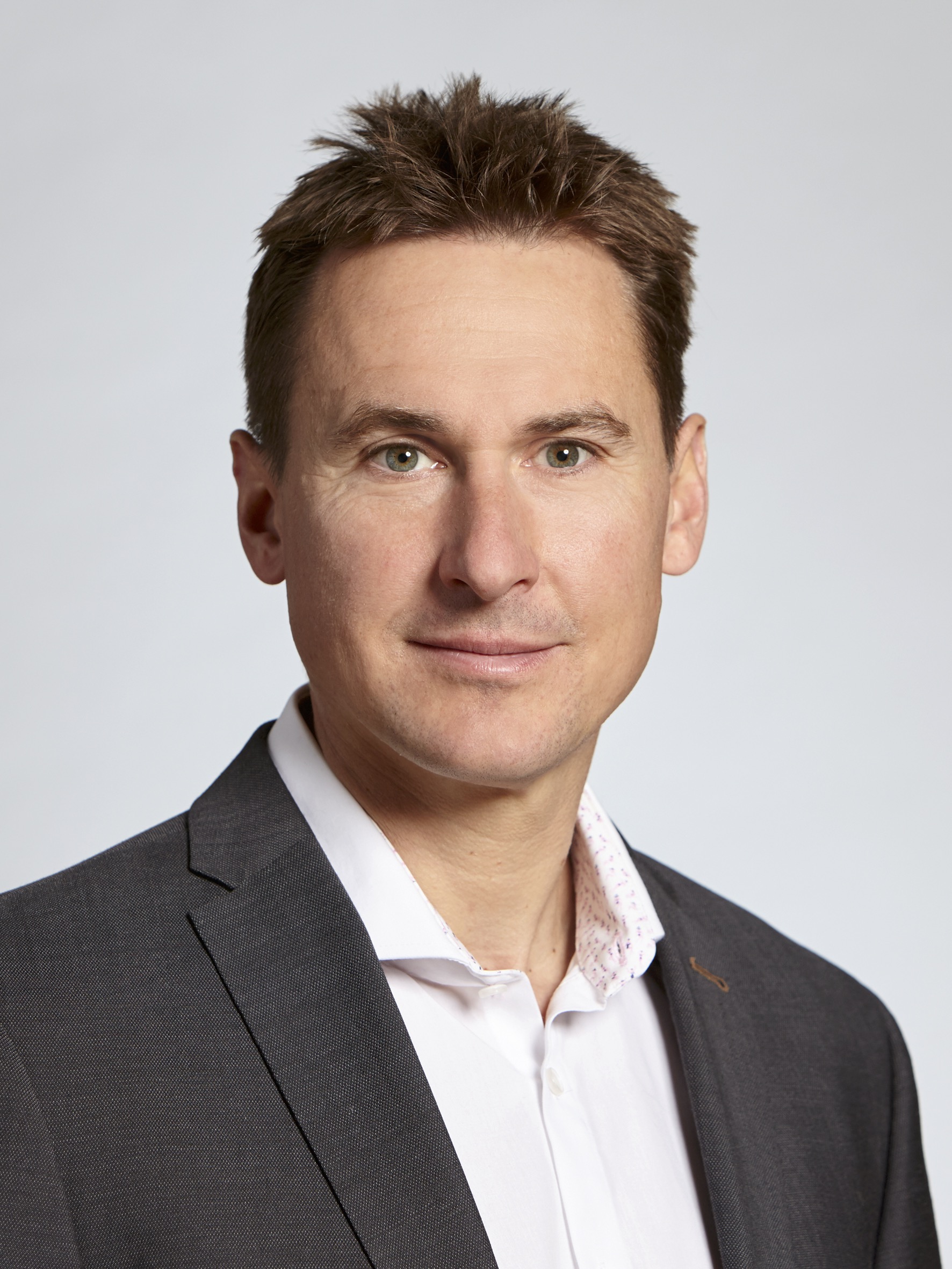}}]{Florian D\"orfler}
is a Professor at the Automatic Control Laboratory at ETH Z\"urich. He received his Ph.D. degree in Mechanical Engineering from the University of California at Santa Barbara in 2013, and a Diplom degree in Engineering Cybernetics from the University of Stuttgart in 2008. From 2013 to 2014 he was an Assistant Professor at the University of California Los Angeles. He has been serving as the Associate Head of the ETH Z\"urich Department of Information Technology and Electrical Engineering from 2021 until 2022. His research interests are centered around automatic control, system theory, optimization, and learning. His particular foci are on network systems, data-driven settings, and applications to power systems. He is a recipient of the distinguished young research awards by IFAC (Manfred Thoma Medal 2020) and EUCA (European Control Award 2020). He and his team received best paper distinctions in the top venues of control, machine learning, power systems, power electronics, circuits and systems. They were recipient of the 2011 O. Hugo Schuck Best Paper Award, the 2012-2014 Automatica Best Paper Award, the 2016 IEEE Circuits and Systems Guillemin-Cauer Best Paper Award, the 2022 IEEE Transactions on Power Electronics Prize Paper Award, the 2024 Control Systems Magazine Outstanding Paper Award, and multiple Best PhD thesis awards at UC Santa Barbara and ETH Z\"urich. They were further winners or finalists for Best Student Paper awards at the European Control Conference (2013, 2019), the American Control Conference (2010,2016,2024), the Conference on Decision and Control (2020), the PES General Meeting (2020), the PES PowerTech Conference (2017), the International Conference on Intelligent Transportation Systems (2021), the IEEE CSS Swiss Chapter Young Author Best Journal Paper Award (2022,2024), the IFAC Conference on Nonlinear Model Predictive Control (2024), and NeurIPS Oral (2024). He is currently serving on the council of the European Control Association and as a senior editor of Automatica.
\end{IEEEbiography}

\end{document}